\newtheorem{claim}{Claim}[section]
\definecolor{crimsonglory}{rgb}{0,0,0}
\newtheorem{example}{Example}
 \newtheorem{theorem}{Theorem}[section]
 \newtheorem{lemma}[theorem]{Lemma}
 \newtheorem{proposition}[theorem]{Proposition}
 \newtheorem{corollary}[theorem]{Corollary}
 \newtheorem{definition}[theorem]{Definition}
\def\GrabProofArgument[#1]{ #1: \egroup\ignorespaces}
\def\proof{\noindent\textbf\bgroup Proof%
	\@ifnextchar[{\GrabProofArgument}{. \egroup\ignorespaces}}
\newcommand{\etal}{\textit{et al. }}
\newcommand{\apxsplit}{1.76\xspace}
\newcommand{\br}{\boldsymbol{br}}
\newcommand{\bropt}{{\boldsymbol{br}^*}}
\newcommand{\opt}{\boldsymbol{OPT}}
\newcommand{\oh}{\mathcal{O}}
\newcommand{\telebr}{\textsc{Telephone Broadcasting}\xspace}
\newcommand{\twocover}{\textsc{Prefix Covering}\xspace}
\newcommand{\SIMPLE}{\textsc{Basic}\xspace}
\newcommand{\splitalgo}{\textsc{FastSplitBroadcast}\xspace}
\newcommand{\CLASS}{\textsc{Class}\xspace}
\newcommand{\kcycle}{cycle-star\xspace}
\newcommand{\kpath}{melon\xspace}
\newcommand{\Kcycle}{Cycle-star\xspace}
\newcommand{\Kpath}{Melon\xspace}
\newcommand{\roundset}[2]{\mathfrak{R}_{#1}(#2)}
\newcommand{\twocovertwo}{\textsc{Double Prefix Covering}\xspace}
\newcommand{\poly}{\operatorname{poly}}
\newcounter{proccnt}
\newcommand{\konote}[1]{}
\title{On Hardness and Approximation of Broadcasting in Structured Graphs\ \\ \ \\ }
\author[1]{\fontsize{10.5}{12}\selectfont  Jeffrey Bringolf}
\author[1]{\fontsize{10.5}{12}\selectfont Hovhannes A. Harutyunyan\vspace{3mm}}
\author[2]{\ \vspace{1pt}   \\ \fontsize{10.5}{12}\selectfont Shahin Kamali}
\author[2]{\fontsize{10.5}{12}\selectfont Seyed-Mohammad~Seyed-Javadi}
\affil[1]{\fontsize{10.5}{12}\selectfont
Concordia University, Montreal, \vspace{2mm}Canada\footnote{Emails: 
\href{mailto:bringolfj@gmail.com}{bringolfj@gmail.com}, and \href{mailto:hovhannes.harutyunyan@concordia.ca}{hovhannes.harutyunyan@concordia.ca}}}
\affil[2]{\fontsize{10.5}{12}\selectfont
York University, Toronto, Canada\footnote{Emails: 
\href{mailto:kamalis@yorku.ca}{kamalis@yorku.ca}, and \href{mailto:mohammad.sj2@gmail.com}{mohammad.sj2@gmail.com}}}
\begin{document}
	\newcommand{\ignore}[1]{}
\renewcommand{\theenumi}{(\roman{enumi}).}
\renewcommand{\labelenumi}{\theenumi}
\sloppy
\date{} 
\newenvironment{subproof}[1][\proofname]{
	  \renewcommand{\Box}{ \blacksquare}%
	\begin{proof}[#1]%
	}{%
	\end{proof}%
}
\newenvironment{subproof2}{
	\renewcommand{\Box}{ \blacksquare}%
	\begin{proof}%
	}{%
	\end{proof}%
}

\newlist{pfparts}{description}{1}
\setlist[pfparts,1]{%
  itemindent=2pt,
  wide,
  itemsep=0pt,topsep=2pt,
  labelsep=0.75ex
}


\maketitle

\thispagestyle{empty}
\allowdisplaybreaks

\begin{abstract}
We study the \textsc{Telephone Broadcasting} problem in graphs with restricted structure. Given a designated source in an undirected graph, the goal is to disseminate a message to all vertices in the minimum number of rounds, where in each round every informed vertex may inform at most one uninformed neighbor. For general graphs with $n$ vertices, the problem is NP-hard. Recent work shows that the problem remains NP-hard even on restricted graph classes such as graphs of treewidth 2 [Tale 2025], cactus graphs of pathwidth $2$ [Aminian et~al., ICALP 2025] and graphs at distance-1 to a path forest [Egami et~al., MFCS 2025].

In this work, we investigate the problem in several sparse graph families. We first prove NP-hardness for cycle-star graphs, namely graphs formed by $k$ cycles sharing a single vertex, as well as melon graphs, namely graphs formed by $k$ paths with shared endpoints. Despite multiple efforts to understand the problem in these simple graph families, the computational complexity of the problem remained unsettled. Our hardness results answer open questions by Bhabak and Harutyunyan [CALDAM 2015] and Harutyunyan and Hovhannisyan [COCOA 2023] 
concerning the problem's complexity in cycle-star and melon graphs, respectively. 

On the positive side, we present Efficient Polynomial-Time Approximation Schemes (EPTASs) for cycle-star and melon graphs, improving over the best existing approximation factors of $2$ for both graph families. 
Moreover, we identify a structural frontier for tractability by showing that the problem is solvable in polynomial time on graphs of bounded cutwidth, a class that generalizes other families such as graphs of bounded bandwidth. This result subsumes existing tractability results for special sparse graph families such as necklace graphs.

Finally, for split graphs, a fundamental class of highly structured graphs, we obtain a polynomial-time algorithm with approximation factor \apxsplit. This improves on the previously known factor~$2$ bound; the same approach also applies to the multi-source setting.


\end{abstract}
\newpage
\section{Introduction}

In the \telebr problem, the objective is to spread a message from a designated source vertex to all vertices of a network via a sequence of \emph{telephone calls}. The network is modeled as an undirected, unweighted graph on $n$ vertices, and communication proceeds in synchronous rounds. Initially, only the source knows the message. In each round, every informed vertex may contact at most one uninformed neighbor to pass along the message. The goal is to minimize the total number of rounds until all vertices are informed. 

The problem is known to be computationally difficult. Slater \etal~\cite{slater1981nptree} proved that \telebr is NP-Complete. Elkin and Kortsarz~\cite{elkin2002lowerbound} established that approximating \telebr within a factor of $3-\epsilon$ for any $\epsilon > 0$ is NP-hard. On the other hand, Elkin and Kortsarz~\cite{elkin2006sublogarithmic} designed an $\oh(\log n / \log \log n)$-approximation algorithm for general graphs, which is the best existing approximation algorithm.
Whether a constant-factor approximation exists for general graphs remains open, though such results are known for specific classes, including unit disk graphs~\cite{shang2010unitdisk}, cactus graphs~\cite{aminian2025complexity}, and graphs of bounded pathwidth~\cite{aminian2025complexity,egami2025broadcasting}.

A substantial body of work has investigated the complexity of \telebr on restricted graph classes. On the positive side, polynomial-time algorithms are known for several families, including trees~\cite{fraigniaud2002polynomial}, grids, stars of cliques, fully connected trees, necklace graphs (chains of rings), and certain subclasses of cactus graphs; see, e.g., \cite{damaschke2024starclique, gholami2023fullytree, Liestman1988Boundeddegree, stohr1991butterfly, fomin2024parameterized}. On the negative side, hardness results persist even for structurally simple graphs such as 
planar cubic graphs \cite{jakoby1992minimum}. 
More recently, Tale~\cite{tale2024double} showed that the problem remains NP-hard on graphs at distance~2 from path forests, as well as graphs of pathwidth~3 and treewidth~2.
Aminian \etal \cite{aminian2025complexity} established hardness for so-called \emph{snowflake graphs} (cactus graphs of pathwidth~$2$). 
Even stronger results were obtained by Egami \etal \cite{egami2025broadcasting}, who proved NP-hardness for graphs at distance~1 from path forests, i.e., graphs in which deleting a single vertex yields a path forest. Figure~\ref{fig:combined_figures} illustrates several hard sparse graph instances.

These results highlight intriguing frontier cases where the problem's complexity remains unresolved. One such case is that of \emph{\kcycle graphs} (also called \emph{flower graphs}), obtained by joining $k$ cycles at a single vertex. Despite their simple definition, these graphs exhibit surprisingly rich behavior. \telebr on \kcycle graphs has been studied from an approximation perspective: the best known algorithm achieves a factor of~2~\cite{Bhabak2015}, yet the exact complexity remains open.
%
A closely related family is that of \emph{\kpath graphs}, formed by joining $k$ paths at their endpoints. Several approximation algorithms are known for \kpath graphs~\cite{bhabak2019kpath, HarutyunyanH23}, with the current best result being a 2-approximation~\cite{HarutyunyanH23}. Nevertheless, the precise computational complexity of the problem for this family also remains open. 
Another structurally interesting graph family is that of \emph{split graphs}, which consist of a clique and an independent set and therefore fall outside the sparse regime. For single-source \telebr, a 2-approximation algorithm is known for split graphs~\cite{HarutyunyanHSplit23}. Moreover, the multi-source variant of \telebr is NP-hard even on split graphs~\cite{JansenM95}, underscoring that hardness persists despite strong structural restrictions.

\subsection{Contribution}

Our contributions in this paper are as follows. In a nutshell, we tighten the frontier between the tractable and intractable cases of the \telebr problem. 

\begin{itemize}
    \item We prove that \telebr is NP-Complete in two of the simplest sparse graph families: \emph{\kcycle graphs} (Theorem~\ref{Theorem:HardnessKCylce}) 
    and \emph{\kpath graphs} (Theorem~\ref{Theorem:HardnessKPath}). 
    Both families have been studied in the broadcasting literature, yet their complexity status remained open. Our hardness results also imply hardness for broader classes, including series–parallel graphs (SP-graphs), for which no hardness was previously known. 
    \item We design efficient polynomial-time approximation schemes (EPTASs) for both \kcycle graphs (Corollary~\ref{Corol:PTASkCycle}) and \kpath graphs (Corollary~\ref{Coro:PTASKPath}). To the best of our knowledge, these are the first known graph families for which \telebr admits both a hardness result and a matching EPTAS, highlighting their role as boundary cases between tractability and intractability. Our EPTASs extend naturally to the multi-source setting.

     \item We present a polynomial-time algorithm for optimal broadcasting in graphs of bounded cutwidth (Theorem~\ref{Th:BandWidthMain}). 
This result generalizes several previously known tractable cases, such as necklace graphs (chains of rings) and related structures~\cite{harutyunyan2023chainring}, and also applies to the multi-source setting.
Since 
graphs of bounded bandwidth also have bounded cutwidth,
~\cite{chung1997spectral}, our result further implies tractability for graphs of bounded bandwidth.

     Notably, \telebr is among the few problems exhibiting a complexity gap between graphs of bounded cutwidth (where the problem is tractable) and graphs of bounded pathwidth (where the problem is NP-hard~\cite{tale2024double}). In particular, \kcycle and \kpath graphs both have pathwidth~2 but unbounded cutwidth, placing them precisely on the intractable side of this gap. 

\item We present a $\apxsplit$-approximation algorithm for \telebr on split graphs (Theorem~\ref{Th:SplitMain}), improving over the known factor~2 bound~\cite{HarutyunyanHSplit23}. Together with our tractability results for bounded-cutwidth graphs and hardness results for sparse families, this places split graphs as an intermediate structural class. The algorithm applies to both single-source and multi-source settings, the latter known as being APX-hard~\cite{JansenM95}.


\end{itemize}

\begin{figure}[!t]
    \centering
     \begin{minipage}[b]{0.19\textwidth}
        \centering \hspace{-5mm}
        \scalebox{.29}{\melonFive}
        \subcaption{A graph of distance 2 to path forest~\cite{tale2024double}}
        \label{fig:tate}
    \end{minipage} \  \ 
    \begin{minipage}[b]{0.18\textwidth}
        \centering
        \scalebox{.3}{\snowTwo}
        \subcaption{A snowflake graph (cactus of pathwidth 2)~\cite{aminian2025complexity}}
        \label{fig:snowflake}
    \end{minipage} \  \ 
    \begin{minipage}[b]{0.18\textwidth}
        \centering
        \scalebox{.66}{\theirfig} \vspace{.2cm} \\
        \subcaption{Graph of distance 1 to path forest~\cite{egami2025broadcasting}}
        \label{fig:theirfig}
    \end{minipage} \  \
    \begin{minipage}[b]{0.19\textwidth}
        \centering
        \scalebox{.66}{\kcyclefig} \\ 
        \subcaption{A \kcycle graph  (Theorem~\ref{Theorem:HardnessKCylce})\ \\ \ \\ }
        \label{fig:kcycle}
    \end{minipage} \ \ 
        \begin{minipage}[b]{0.17\textwidth}
        \centering 
        \scalebox{.27}{\kpathgraph} 
        \subcaption{A \kpath graph (Theorem~\ref{Theorem:HardnessKPath})\ \\ }
        \label{fig:kpath}
    \end{minipage}
    
    \caption{Sparse graph families for which \telebr is proven to be NP-hard.}
    \label{fig:combined_figures}
\end{figure}

\subsection{Overview of Techniques}

Our results combine hardness reductions, approximation schemes, and dynamic programming, together with fixed-dimension integer programming and min-cost flow formulations, each tailored to the structural properties of the graph families under consideration. We give a brief outline here.

\begin{itemize}

\item For the negative results, we design reductions that encode scheduling constraints directly into the graph structure, adapting and extending hardness techniques introduced by Tale~\cite{tale2024double} and Egami et al.~\cite{egami2025broadcasting}.
In particular, the NP-hardness proofs for \kcycle and \kpath graphs exploit the interaction between cycle (or path) lengths and the timing of broadcast calls. The shared vertex in \kcycle graphs (or the pair of endpoints in \kpath graphs) acts as the originator, and we show that deciding feasibility becomes equivalent to a combinatorial partitioning problem whose complexity propagates to the \telebr instance. Our reductions build a more restricted variant of \emph{Numerical 3-Dimensional Matching (RN3DM)} (Definition~\ref{def:RN3DM}), previously employed in hardness proofs for \telebr by Tale~\cite{tale2024double} and Egami et al.~\cite{egami2025broadcasting}. The restricted variant that we use, which is known to be still NP-hard~\cite{num-matching-hardness}, makes it possible to encode number-theoretic constraints using only these simple graph families.

\item We design efficient polynomial-time approximation schemes for \kcycle and \kpath graphs by reducing the \telebr problem to covering problems over integers. For \kcycle graphs, the task reduces to covering a multiset of cycle lengths $\{\ell_1,\ldots,\ell_k\}$ with the smallest possible broadcast round $\beta$ using pairs of numbers from $[\,\beta\,]$, such that for each $\ell_i$ there exist $a,b\in[\,\beta\,]$ with $a+b\geq \ell_i$ (see Definition~\ref{def:twocover}). For \kpath graphs, the covering problem is slightly different: we seek the smallest $\beta$ such that each $\ell_i$ can be covered by one number from $[\,\beta\,]$ and one from $[\,\beta-d\,]$, where $d$ is the length of the shortest path between two endpoints (see Definition~\ref{def:twocovertwo}). In both cases, approximate solutions to the resulting number-theoretic covering problems translate into near-optimal broadcasting schedules; moreover, the rounded covering instances have constant support and can be solved exactly using fixed-dimension integer linear programming, yielding EPTASs for \kcycle and \kpath graphs.

\item For finding optimal broadcast schemes in graphs of bounded cutwidth, we exploit the fact that a cutwidth-$k$ ordering restricts each vertex to neighbors within a sliding window of $\oh(k)$ vertices. This allows us to describe the status of any partial broadcast tree using a compact boundary state and use a dynamic programming approach which has a table with $n$ rows (one for each prefix of the ordering) and at most $f(k)\cdot n^{k}$ states per row, giving a total size bounded by $f(k)^2\cdot n^{2k+1}$, in which $f(.)$ is a function of $k$ that is independent of $n$. 

\item 
For split graphs, we establish that informing independent-set vertices early is never beneficial (Lemma~\ref{lemma:goodopt}); equivalently, one can assume without loss of optimality that broadcasts to the independent set are deferred as long as possible.
This structural insight allows us to decouple the broadcast process at each vertex into an initial phase within the clique, followed by a controlled dissemination to the independent set.
As a result, the problem becomes related to constructing minimum-degree and minimum-overload coverings of the independent set by clique vertices. 
When discussing these covering problems, at a high level, we seek to minimize the number of clique vertices that are responsible for covering \emph{many} independent-set vertices. Such \emph{high-load clique vertices} may constitute the main bottleneck of the broadcast process and are therefore informed earlier, so that they can switch to informing independent-set vertices sooner than others. We compute a suitable covering via a reduction to a minimum-cost flow problem, yielding an approximation algorithm with approximation factor~$\apxsplit$ for both the single-source and multi-source settings.

\end{itemize}
\section{Preliminaries}

We use notation $[n]$ and $[n,m]$ for positive integers $n$ and $m$ for the set of all positive integers less than or equal to $n$ and all positive integers between $n$ and $m$ (inclusive), respectively (i.e. $[n]=\{1,2,\ldots, n\}$ and $[n,m]=\{n,n+1,\ldots, m\}$).
Also, we refer to the number of distinct elements in a multiset as its “support”. 
 \begin{definition}[\textsc{Telephone Broadcasting}~\cite{hedetniemi1988broadsurvey}]
    An instance $(G,s)$ of the 
     \emph{\telebr} problem 
     is defined by a connected, undirected, and unweighted graph $G=(V, E)$ and a vertex $s \in V$, where $s$ is the only informed vertex. The broadcasting protocol is synchronous and occurs in discrete rounds. In each round, an informed vertex can inform at most one of its uninformed neighbors. The goal is to broadcast the message as quickly as possible so that all vertices in $V$ get informed in the minimum number of rounds.
 \end{definition}

Given that the problem can be solved in linear time on trees~\cite{fraigniaud2002polynomial}, it is common to represent a \emph{broadcast scheme}—that is, a feasible solution to the problem—by a spanning tree of $G$ rooted at $s$. We refer to such a tree as a \emph{broadcast tree}. In this tree, an edge $(u,v)$ in the tree indicates that vertex $v$ receives the message from $u$ in the broadcast process. Thus, throughout the paper, the terms “broadcast tree’’ and “broadcast scheme’’ are used interchangeably. Likewise, a \emph{broadcast subtree} denotes a partial solution, i.e., a subtree of a broadcast tree capturing how a subset of vertices is informed.
Throughout, we use $\bropt(G,s)$ to denote the optimal broadcast time of graph $G$ when the originator is $s$. When the context is clear, we simply write $\bropt$ in place of $\bropt(G,s)$.

We also consider a multi-source variant of \telebr, in which a set $S \subseteq V$ of vertices is initially informed at round~$0$. The broadcasting process proceeds as before, with each informed vertex informing at most one uninformed neighbor per round. The goal is again to minimize the number of rounds until all vertices are informed. Throughout the paper, unless explicitly stated otherwise, \telebr\ denotes the single-source broadcasting problem.

\begin{definition}[\Kcycle and \Kpath Graphs]\label{def:k-cycle-path} \ 
    A \emph{\kcycle} graph $G$ contains $k$ cycles 
    with exactly one common vertex $r$. 
 A \emph{\kpath} graph $G$ contains $k$ paths with shared endpoints $s$ and $t$. 
\end{definition}

See Figure~\ref{fig:combined_figures} (parts (c) and (d)) for illustrations of \kcycle and \kpath graphs.  
Observe that any \kcycle (respectively, \kpath) graph can be fully described by a set of integers specifying the number of vertices on each cycle (respectively, path).  
This representation allows us to relate such graphs directly to integer sets, a connection that underlies both our hardness proofs and our algorithmic techniques.  
These graph families are also simple representatives of well-studied classes: \kcycle graphs form a subclass of cactus graphs, and \kpath graphs form a subclass of series–parallel graphs.  
Consequently, our hardness results extend beyond \kcycle and \kpath graphs to broader sparse families, including series–parallel graphs, for which no hardness results were previously known.


 \begin{definition}[Graph Cutwidth]
The \emph{cutwidth} of a graph $G=(V,E)$, denoted $\mathrm{cw}(G)$, is the minimum integer $k$ such that there exists a linear ordering $\pi=(v_1,\ldots,v_n)$ of $V$ with the property that, for every $i\in[n]$, at most $k$ edges have one endpoint in $\{v_1,\ldots,v_i\}$ and the other in $\{v_{i+1},\ldots,v_n\}$.
\end{definition}

Computing the cutwidth is known to be fixed-parameter tractable. In particular, for every fixed constant $k$, there exists a linear-time algorithm that either constructs a cutwidth-$k$ vertex ordering or correctly reports that the cutwidth exceeds~$k$~\cite{thilikos2005cutwidth}. This makes cutwidth a particularly well-suited parameter for algorithmic approaches based on linear layouts and dynamic programming, such as those developed in this paper.


\begin{definition}[Split graphs]
A graph $G=(V,E)$ is a \emph{split graph} if its vertex set can be partitioned into two sets $C$ and $I$ such that the subgraph induced by $C$ is a clique and the subgraph induced by $I$ is an independent set.
\end{definition}

\section{Hardness Results}
In this section, we present our hardness results for \telebr in \kcycle graphs and \kpath graphs. These results extend the existing hardness landscape by showing that even some of the simplest graph families exhibit computational intractability. 

Recent hardness results for \telebr typically rely on reductions from variants of the \emph{Numerical Matching with Target Sums} (NMTS) problem, which is known to be strongly NP-hard (Problem SP17 of ~\cite{computers-and-intractability}).
An instance of NMTS consists of three multisets $U, V,$ and $W$, each containing $m$ positive integers. 
The decision problem asks whether there exists a partition of $U \cup V \cup W$ into $m$ disjoint sets 
$A_1, \ldots, A_m$, such that each $A_i$ contains exactly one element $u_i \in U$, one element $v_i \in V$, 
and one element $w_i \in W$, with the property that $u_i + v_i = w_i$ for all $i \in [m]$.

Tale~\cite{tale2024double} introduced a variant of NMTS, called \emph{Numerical 3-Dimensional (Almost) Matching}, in which no repeated entries appear in $U$ and $V$, and all elements of $U \cup V \cup W$ are multiples of $m$ and two integers $T$ and $V$. As in NMTS, the objective is to partition $U \cup V$ into $m$ disjoint sets $A_1, \ldots, A_m$, each containing exactly one element from $U$ and one element from $V$. However, instead of requiring $\sum_{a \in A_i} a = w_i$, the condition is relaxed to 
$\sum_{a \in A_i} a \geq T - \lambda.$
Tale showed that this problem is strongly NP-hard via a reduction from NMTS, and subsequently reduced it to \telebr in graphs of pathwidth at most~3. 
Egami et al.~\cite{egami2025broadcasting} studied another restricted variant of NMTS, in which all members of $U$ are even while all members of $V \cup W$ are odd. They proved that this variant remains strongly NP-hard and provided a reduction to \telebr in graphs that are distance-1 from a path forest. 

To establish our hardness results, we rely on another restricted version of NMTS in which $U = V = [m]$. This problem, referred to as \emph{Restricted Numerical 3-Dimensional Matching (RN3DM)}, was studied by Yu et al.~\cite{num-matching-hardness}, who established its strong NP-hardness. Their proof relies on a reduction from the \emph{two-machine flowshop scheduling problem with delays}, which they first showed to be NP-hard. Formally, RN3DM is defined as follows.

\begin{definition}[\textsc{Restricted Numerical 3-Dimensional Matching (RN3DM)}~\cite{num-matching-hardness}]
Given a multiset $W = \{w_1, w_2, \ldots, w_m\}$ of positive integers such that
$\sum_{i=1}^m w_i + m(m+1) = m\cdot e$ for some integer $e$, 
do there exist two permutations $\lambda$ and $\mu$ of $[m]$ such that
$\lambda(i) + \mu(i) + w_i = e \quad \text{for all } 1 \leq i \leq m \, $?
\label{def:RN3DM}
\end{definition}



\begin{example}[Yes-instance of RN3DM]\label{example:RN3DM}
Consider the instance of RN3DM with $W=\{1,3,4,4\}$, so $m=4$. Note that we have $\sum_{i=1}^m w_i + m(m+1) = 
32 = m\cdot e$ for $e=8$.
Let $\lambda = (4,1,3,2)$ and $\mu = (3,4,1,2)$. One can check that for every $i \in [4]$, $\lambda(i) + \mu(i) + w_i = e.$
For instance, $\lambda(1)+\mu(1)+w_1 = 4+3+1=8$, and $\lambda(2)+\mu(2)+w_2 = 1+4+3=8.$
The same holds for $i=3,4$. Hence $\lambda$ and $\mu$ certify that $W$ is a \emph{yes-instance} of RN3DM.
\end{example}

Given that RN3DM is strongly NP-hard~\cite{num-matching-hardness}, we assume that the numbers in $W$ are encoded in unary. 
We present reductions from RN3DM to \telebr in \kcycle and \kpath graphs. 
For \kcycle graphs, our approach introduces an intermediate problem as part of the reduction series, 
whereas for \kpath graphs, we directly reduce RN3DM to \telebr.

\subsection{Hardness of Telephone Broadcasting in \Kcycle Graphs}

We prove that the \kcycle \telebr is NP-hard through a series of reductions starting from RN3DM. 
As a first step, we define an intermediate problem, called \emph{Even-Odd RN3DM}, which is a variant of RN3DM. 
In this variant, instead of using two permutations $\lambda$ and $\mu$ of $[m]$ (matched with the input set $W$), we use $\alpha$ as a permutation of the even integers in $[2m]$ and $\beta$ as a permutation of the odd integers in $[2m]$ to be matched with an input set that we call $C$. 
Formally, Even-Odd RN3DM is defined as follows.

\begin{definition}[\textsc{Even-Odd Restricted Numerical 3-Dimensional Matching (Even-Odd RN3DM)}]
Given a multiset $C = \{c_1, c_2, \ldots, c_m\}$ of positive integers such that 
$\sum_{i=1}^m c_i = m(2m+1),$
does there exist a permutation $\alpha$ of $\{2,4,6,\ldots,2m\}$ and a permutation $\beta$ of $\{1,3,5,\ldots,2m-1\}$ such that 
$
\alpha(i) + \beta(i) = c_i \quad \text{for all } 1 \leq i \leq m \, ?
$
\end{definition}

Observe that any instance of Even-Odd RN3DM in which $C$ contains an even element is trivially a \textsc{No}-instance, since each $c_i$ must be represented as the sum of one even and one odd number, which always yields an odd value. Hence, without loss of generality, we restrict attention to instances where all elements of $C$ are odd.

\begin{example}[Yes-instance of Even-Odd RN3DM]
\label{example:EvenOddRN3DM}
Consider the instance of Even-Odd RN3DM with $C=\{7,7,9,13\}$, where $m=4$. 
Note that $\sum_{i=1}^4 c_i = 36 = m(2m+1)$. 
Now take the permutations $\alpha = (4,6,2,8)$ and $\beta = (3,1,7,5)$. 
One can check for every $i\in[4]$, $\alpha(i)+ \beta(i) = c_i$. For instance
$\alpha(1)+\beta(1) = 4+3 =  c_1$, and
$\alpha(2)+\beta(2) = 6+1 = c_2$. 
The same holds for $i=3,4$. Thus, $\alpha$ and $\beta$ certify that $C$ is a \emph{yes-instance} of Even-Odd RN3DM. \vspace{-2mm}
\end{example}

\newcommand{\evenOddLemma}[1]{
\begin{lemma}\label{Lem:EvenOddRN3DM}
\emph{#1}Even-Odd RN3DM is strongly NP-hard. 
\end{lemma}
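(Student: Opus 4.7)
The plan is to reduce from RN3DM, which is strongly NP-hard by Yu et al.~\cite{num-matching-hardness}. The key observation is that any pair of permutations of $[m]$ can be bijectively encoded as a pair consisting of a permutation of the even integers in $\{2,4,\ldots,2m\}$ and a permutation of the odd integers in $\{1,3,\ldots,2m-1\}$ via a simple scale-and-shift. Once this bijection is set up, an RN3DM constraint of the form $\lambda(i)+\mu(i)+w_i=e$ translates transparently into an Even-Odd RN3DM constraint $\alpha(i)+\beta(i)=c_i$ for an appropriate choice of $c_i$.

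Concretely, given $W=\{w_1,\ldots,w_m\}$ with $\sum_i w_i + m(m+1) = me$, I would construct $C=\{c_1,\ldots,c_m\}$ by setting $c_i := 2e - 2w_i - 1$ for each $i\in[m]$. For the forward direction, if $\lambda,\mu$ are permutations of $[m]$ certifying $W$ as a yes-instance, I define $\alpha(i):=2\lambda(i)$ and $\beta(i):=2\mu(i)-1$; then $\alpha$ is a permutation of $\{2,4,\ldots,2m\}$, $\beta$ is a permutation of $\{1,3,\ldots,2m-1\}$, and $\alpha(i)+\beta(i)=2(\lambda(i)+\mu(i))-1 = 2(e-w_i)-1 = c_i$. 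The reverse direction inverts this by $\lambda(i):=\alpha(i)/2$ and $\mu(i):=(\beta(i)+1)/2$, which yields permutations of $[m]$ satisfying $\lambda(i)+\mu(i)+w_i=e$.

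Three routine checks remain, none of which I expect to be an obstacle. First, the sum condition $\sum_i c_i = m(2m+1)$ required by Even-Odd RN3DM follows from a direct computation using $\sum_i w_i = me - m(m+1)$. Second, the reduction is polynomial-time under unary encoding, since each $c_i$ is bounded by $2e$ and the construction is entrywise. Third, degenerate cases in which some $w_i$ falls outside the range $[e-2m,\,e-2]$ correspond to trivially rejectable no-instances of RN3DM (any such $w_i$ cannot be written as $e-\lambda(i)-\mu(i)$ with $\lambda(i),\mu(i)\in[m]$), and can be handled by a simple preprocessing step before invoking the construction. The main subtlety is purely bookkeeping: confirming that the scale-and-shift map is indeed a bijection between the two solution spaces, which is immediate from the elementary correspondence $k\leftrightarrow 2k$ between $[m]$ and $\{2,4,\ldots,2m\}$ and $k\leftrightarrow 2k-1$ between $[m]$ and $\{1,3,\ldots,2m-1\}$.
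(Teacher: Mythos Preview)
Your proposal is correct and essentially identical to the paper's proof: the paper also reduces from RN3DM by setting $c_i = 2e - 2w_i - 1$, uses the same bijections $\alpha(i)=2\lambda(i)$, $\beta(i)=2\mu(i)-1$ (and their inverses) for the two directions, and verifies the sum condition $\sum_i c_i = m(2m+1)$ in the same way. Your extra remark about preprocessing out-of-range $w_i$ values is a harmless addition that the paper leaves implicit.
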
}

\begin{figure}
    \centering
    \scalebox{.46}{\kcycleReductionNew}
    \caption{An instance $C=\{7,7,9,13\}$ of Even-Odd RN3DM reduces to the above instance of \telebr in \kcycle graphs (with $s$ as the originator and $br = 8$). The broadcast tree associated with solution $\alpha = (4,6,2,8)$ and $\beta = (3,1,7,5)$ is highlighted. 
}
    \label{fig:kcycleReduction}
\end{figure}

\evenOddLemma{}

\begin{proof}[sketch]
We use a reduction from RN3DM, which is strongly NP-hard~\cite{num-matching-hardness}. 
Given an instance $W=\{w_1,\ldots,w_m\}$ of RN3DM with target sum $e$, we construct an instance $C=\{c_1,\ldots,c_m\}$ of Even-Odd RN3DM by setting
$c_i = 2e - 2w_i - 1 \quad \text{for each } i \in [m].$
It is straightforward to verify that $W$ is a \textsc{Yes}-instance of RN3DM if and only if $C$ is a \textsc{Yes}-instance of Even-Odd RN3DM. 
\end{proof}

\begin{proof}
We give a polynomial-time reduction from RN3DM. 
Let $W=\{w_1,\ldots,w_m\}$ be an arbitrary instance of RN3DM. 
By definition, we have $\sum_{i=1}^m w_i + m(m+1) = me$ for some integer $e$. Construct an instance $C=\{c_1,\ldots,c_m\}$ of Even-Odd RN3DM, where  
$c_i = 2e - 2w_i - 1 \quad \text{for each } i \in [m].$  
To illustrate, the instance of RN3DM in Example~\ref{example:RN3DM} reduces to the instance of Even-Odd RN3DM in Example~\ref{example:EvenOddRN3DM}.
First, we verify that $C$ satisfies the side condition of Even-Odd RN3DM:
\[ 
\sum_{i=1}^m c_i 
= \sum_{i=1}^m (2e - 2w_i - 1) 
= 2em - 2(\sum_{i=1}^m w_i) - m 
= 2em - 2(me - m(m+1)) - m 
= m(2m+1).
\]
In the above equalities, we used the definition of $C$ and the side condition that holds for the instance $W$ of RN3DM. 
Thus, $C$ is a valid instance of Even-Odd RN3DM.  

Next, we show the correctness of the reduction.  

\smallskip
\noindent \emph{(Forward direction).}  
Suppose $W$ is a yes-instance of RN3DM. Then there exist permutations $\lambda$ and $\mu$ of $[m]$ such that for each $i \in [m]$, $\lambda(i) + \mu(i) + w_i = e.$
Define $\alpha$ as the permutation of $\{2,4,\ldots,2m\}$ given by $\alpha(i) = 2\lambda(i)$, and $\beta$ as the permutation of $\{1,3,\ldots,2m-1\}$ given by $\beta(i) = 2\mu(i)-1$. Then, for every $i$,
\[
\alpha(i) + \beta(i) = 2\lambda(i) + (2\mu(i)-1) = 2(\lambda(i)+\mu(i)) - 1 = 2(e-w_i) - 1 = 2e - 2w_i - 1 = c_i.
\]
Hence, $\alpha$ and $\beta$ certify that $C$ is a yes-instance of Even-Odd RN3DM.  

\smallskip
\noindent \emph{(Reverse direction).}  
Suppose $C$ is a yes-instance of Even-Odd RN3DM. Then there exist permutations $\alpha$ of $\{2,4,\ldots,2m\}$ and $\beta$ of $\{1,3,\ldots,2m-1\}$ such that for each $i \in [m]$, $\alpha(i) + \beta(i) = c_i.$ 
Define $\lambda(i) = \alpha(i)/2$ and $\mu(i) = (\beta(i)+1)/2$. Both $\lambda$ and $\mu$ are permutations of $[m]$. For each $i$,
\[
\lambda(i) + \mu(i) + w_i 
= \frac{\alpha(i)}{2} + \frac{\beta(i)+1}{2} + w_i 
= \frac{c_i+1}{2} + w_i 
= \frac{(2e - 2w_i - 1)+1}{2} + w_i 
= e.
\]
Thus, $\lambda$ and $\mu$ certify that $W$ is a yes-instance of RN3DM.  

\smallskip
Since the reduction preserves yes-instances in both directions and can be carried out in polynomial time, Even-Odd RN3DM is strongly NP-hard.
\end{proof}

We are ready to prove the main result of this section, namely, the hardness of \telebr in \kcycle.

\begin{theorem}
\telebr in \kcycle graphs is NP-Complete. \label{Theorem:HardnessKCylce}
\end{theorem}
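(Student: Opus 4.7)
The plan is to reduce from Even-Odd RN3DM, which is strongly NP-hard by Lemma~\ref{Lem:EvenOddRN3DM}. Membership in NP is immediate, since any broadcast scheme serves as a polynomial-size certificate that can be verified in polynomial time by simulation.

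Given an Even-Odd RN3DM instance $C=\{c_1,\ldots,c_m\}$ (with each $c_i$ odd and $\sum_i c_i = m(2m+1)$), construct a $k$-cycle graph $G$ with $m$ cycles joined at a common originator $s$, where the $i$-th cycle contains $\ell_i := 4m - c_i + 2$ additional vertices. The claim is that $C$ is a yes-instance if and only if $\bropt(G,s) \le 2m$. The key analytic fact is that on a cycle of $\ell+1$ vertices, if $s$ calls its two cycle-neighbors at rounds $a \le b$, the broadcast time on that cycle equals $\lfloor (a+b+\ell-1)/2\rfloor$ (where the two waves meet), while if $s$ calls only one cycle-neighbor at round $a$, the cycle requires $a+\ell-1$ rounds. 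With $\ell_i = 4m-c_i+2$, the condition ``cycle $i$ finishes by round $2m$'' reduces to $a_i+b_i\le c_i$ in the double-call case and to $a_i\le c_i-2m-1$ in the single-call case.

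The forward direction is direct: a solution $(\alpha,\beta)$ of Even-Odd RN3DM induces a broadcast scheme where $s$ calls the two neighbors of cycle $i$ at rounds $\alpha(i)$ and $\beta(i)$; since $\alpha(i)+\beta(i)=c_i$, each cycle finishes at round $\lfloor (c_i+\ell_i-1)/2\rfloor = 2m$. For the reverse direction, assume $\bropt(G,s)\le 2m$ and let $p$ be the number of cycles for which $s$ makes only one call. A sum argument combines (i) the upper bound $T_{\mathrm{used}}\le (m-p)(2m+1)$, obtained by summing $a_i\le c_i-2m-1$ over single-call cycles and $a_i+b_i\le c_i$ over double-call cycles, with (ii) the lower bound $T_{\mathrm{used}}\ge (2m-p)(2m-p+1)/2$, the minimum possible sum of $2m-p$ distinct positive integers. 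These two bounds force $p=0$. Consequently $s$ calls all $2m$ neighbors at distinct rounds in $[2m]$, and the tight identity $\sum_i(a_i+b_i)=m(2m+1)=\sum_i c_i$ combined with the constraints $a_i+b_i\le c_i$ yields $a_i+b_i=c_i$ for every $i$. Since each $c_i$ is odd, each pair contains exactly one even and one odd element, recovering the required permutations $\alpha$ and $\beta$.

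The main obstacle is ruling out single-call strategies: at first glance $s$ could save calling rounds by letting some cycle be informed from only one direction, and one must verify this cannot be feasible outside the tight double-call regime. The sum argument above just barely succeeds, and it hinges on the precise choice $\ell_i = 4m-c_i+2$; using $\ell_i = 4m-c_i+1$ instead leaves slack that allows certain no-instances to admit schemes with $\bropt\le 2m$, breaking the reduction. Getting the arithmetic tight is the critical technical step.
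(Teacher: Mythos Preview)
Your proof is correct, but it takes a different route from the paper's. The paper uses cycle sizes $c_i$ directly (so cycle $i$ has $c_i$ extra vertices), whereas you use the complementary lengths $\ell_i = 4m - c_i + 2$. Both constructions yield a graph on $1+m(2m+1)$ vertices, but the analyses diverge: the paper observes that at most $1 + i(i+1)/2$ vertices can be informed by round $i$, and since this bound equals $n$ at $i=2m$, tightness immediately forces $s$ to call in every round and each cycle to receive two calls---no separate single-call analysis is needed. Your sum argument (bounding $T_{\mathrm{used}}$ from above via the per-cycle constraints and from below via distinctness of call rounds, then solving $p(p+1)\le 0$) reaches the same conclusion but with more bookkeeping. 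Interestingly, because your graph has the same vertex count, the paper's counting argument would apply verbatim to your construction as well, so you could shortcut the single-call case entirely.

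One small gap you should patch: your construction needs $\ell_i \ge 2$, i.e., $c_i \le 4m$. Nothing in the definition of Even-Odd RN3DM bounds the $c_i$ individually, so you should add the preprocessing observation that any instance with some $c_i \notin [3,4m-1]$ is trivially a no-instance (since $\alpha(i)+\beta(i)\in[3,4m-1]$), after which $\ell_i \ge 3$ is guaranteed. With that one sentence added, the reduction is sound and polynomial (noting that $\sum_i \ell_i = m(2m+1)$, so the graph size is polynomial in the unary input).
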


\begin{proof}
    \telebr is in NP in general graphs (and hence also for \kcycle graphs). 
    We prove NP-hardness by a reduction from the 
    Even-Odd RN3DM problem, which is strongly NP-hard by Lemma~\ref{Lem:EvenOddRN3DM}. 
    
    Let $C= \{c_1,c_2,\ldots,c_m\}$ be an instance of Even-Odd RN3DM. 
    We construct an instance $(G,s,t)$ of \telebr in \kcycle graphs as follows. 
    Graph $G$ consists of $m$ cycles, containing $c_1, c_2, \ldots, c_m$, extra vertices in addition to a single vertex $s$ that is shared between them. Now $s$ serves as the originator of the broadcast. 
    Thus, the number of vertices in $G$ is $n = \sum_{i=1}^m c_i + 1 = m(2m+1) + 1,$
    and the degree of $s$ is $2m$. 
    Since RN3DM is strongly NP-hard by Lemma~\ref{Lem:EvenOddRN3DM}, we may assume the values $c_i$ are encoded in unary, so the reduction is polynomial time. 
    The decision problem asks whether broadcasting completes within $t = 2m$ rounds. See Figure~\ref{fig:kcycleReduction}.  
    
\smallskip
\noindent \emph{(Forward direction).}  
 Suppose the answer to instance $C$ of Even-Odd RN3DM is yes. 
    Then there exist permutations $\alpha$ of $\{2,4,\ldots, 2m\}$ and $\beta$ of $\{1,3,\ldots, 2m-1\}$ such that $\alpha(i)+\beta(i) = c_i$ for all $i \in [m]$. 
    Construct a broadcast scheme as follows: 
    the source $s$ informs its two neighbors in each cycle $c_i$ at times $2m-\alpha(i)+1$ and $2m-\beta(i)+1$, respectively. Given that $\alpha$ and $\beta$ are permutations of space $[2m]$, the source $s$ informs exactly one vertex at each given round $i\in[2m]$. 
    Each newly informed vertex continues by informing its uninformed neighbor in the subsequent round. 
    Hence, by round $br=2m$, exactly $\alpha(i)+\beta(i)=c_i$ vertices of the cycle (excluding $s$) are informed, and, therefore, the entire graph is informed. 
    Thus, $(G,s,br)$ is a yes-instance of \telebr. 

\smallskip
\noindent \emph{(Reverse direction).}  
 Now suppose $(G,s,br)$ is a yes-instance of \telebr with $br=2m$. Therefore, there is a broadcast scheme that completes within $2m$ rounds.
    Without loss of generality, assume the scheme is non-lazy, i.e., every informed vertex calls an uninformed neighbor whenever possible. 
    Let $t_i$ denote the number of vertices newly informed in round $i$. 
    At round 0, only $s$ is informed: $t_0=1$. 
    At each round $i$, only $s$ and the vertices informed at time $i-1$ can inform new vertices (since other vertices of degree 2 have no additional uninformed neighbors left). 
    Hence, $t_{i+1} \leq t_i + 1$. 
    It follows that the number of informed vertices by round $i$ is at most 
    $        1 + 1 + 2 + \cdots + i = 1 + \frac{i(i+1)}{2}$.
    For $i=2m$, this reaches exactly $
        1 + m(2m+1) = n$,
    so the broadcast must finish in exactly $2m$ rounds (and not earlier). 
    This forces the source $s$ to make a call in every round, and since $\deg(s)=2m$, $s$ informs exactly one neighbor per round. 
    Consequently, each cycle $c_i$ must be partitioned into two paths, informed from $s$ in opposite directions, of lengths $\alpha(i)$ and $\beta(i)$ with $\alpha(i)+\beta(i)=c_i$. 
    Because $c_i$ is odd (otherwise the Even-Odd RN3DM instance would be trivial), we conclude that one of $\alpha(i),\beta(i)$ is odd and the other is even. 
    This provides the exact decomposition required for a yes-instance of Even-Odd RN3DM. 
%
 %
\end{proof}

\subsection{Hardness of Telephone Broadcasting in \Kpath Graphs}

Next, we use a reduction from RN3DM to establish NP-Completeness of \telebr in \kpath graphs.

\newcommand{\melonHardnessTheorem}[1]{
\begin{theorem}
\emph{#1} \telebr in \kpath graphs is NP-Complete. \label{Theorem:HardnessKPath}
\end{theorem}}

\melonHardnessTheorem{}

\begin{proof}[sketch]
We prove NP-hardness by a reduction from the \emph{Restricted Numerical 3-Dimensional Matching} (RN3DM) problem (Definition~\ref{def:RN3DM}), which is strongly NP-hard~\cite{num-matching-hardness}.

Let $W=\{w_1,w_2,\ldots,w_m\}$ be an instance of RN3DM. 
By definition, $\sum_{i=1}^m w_i + m(m+1) = m\cdot e$ for some integer $e$. 
We construct an instance $(G,s,br)$ of \telebr in \kpath graphs as follows.    
The graph $G$ consists of $m$ internally disjoint paths with shared endpoints $s$ and $t$, where the $i$-th path contains 
$\ell_i = e - w_i$
internal vertices in addition to $s$ and $t$. Furthermore, we add an edge connecting $s$ and $t$. 
The vertex $s$ serves as the source of the broadcast.  
The total number of vertices in $G$ is $n = 2 + \sum_{i=1}^m (e - w_i) = 2 + me - \sum_{i=1}^m w_i = 2 + m(m+1).$
Both $s$ and $t$ have degree $2m+1$.  
Since RN3DM is strongly NP-hard, we may assume that the values $w_i$ are encoded in unary, ensuring that the reduction is polynomial time.  
The decision problem asks whether broadcasting completes within $br = m+1$ rounds (see Figure~\ref{fig:pathReduction}). The answer to this decision problem is yes if and only if the $W$ is a yes-instance of RN3DM. 
\end{proof}

\begin{proof}
\telebr belongs to NP for general graphs and hence also for \kpath graphs. 
We prove NP-hardness by a reduction from the \emph{Restricted Numerical 3-Dimensional Matching} (RN3DM) problem (Definition~\ref{def:RN3DM}), which is strongly NP-hard~\cite{num-matching-hardness}.

Let $W=\{w_1,w_2,\ldots,w_m\}$ be an instance of RN3DM. 
By definition, $\sum_{i=1}^m w_i + m(m+1) = m\cdot e$ for some integer $e$. 
We construct an instance $(G,s,br)$ of \telebr in \kpath graphs as follows.    
The graph $G$ consists of $m$ internally disjoint paths with shared endpoints $s$ and $t$, where the $i$-th path contains 
$\ell_i = e - w_i$
internal vertices in addition to $s$ and $t$. Furthermore, we add an edge connecting $s$ and $t$. 
The vertex $s$ serves as the source of the broadcast.  
The total number of vertices in $G$ is $n = 2 + \sum_{i=1}^m (e - w_i) = 2 + me - \sum_{i=1}^m w_i = 2 + m(m+1).$
Both $s$ and $t$ have degree $2m+1$.  
Since RN3DM is strongly NP-hard, we may assume that the values $w_i$ are encoded in unary, ensuring that the reduction is polynomial time.  
The decision problem asks whether broadcasting completes within $br = m+1$ rounds (see Figure~\ref{fig:pathReduction}).  

\smallskip
\noindent \emph{(Forward direction).}  
Suppose $W$ is a yes-instance of RN3DM. 
Then there exist permutations $\lambda$ and $\mu$ of $[m]$ such that $\lambda(i)+\mu(i) = e-w_i = \ell_i$ for all $i \in [m]$.  
Construct a broadcast schedule as follows:  
in round~1, the source $s$ calls $t$.  
Subsequently, in round $j \in [m]$, $s$ and $t$ inform their neighbors in the $j$'th path at times $m+1-\lambda(j)$ and $m+1-\mu(j)$, respectively.  
Since $\lambda$ and $\mu$ are permutations, exactly one new vertex is informed from $s$ and one from $t$ in each round (i.e., this is a valid broadcast scheme).  
Each newly informed vertex continues the broadcast by informing its remaining neighbor in the next round.  
Hence, by round $br=m+1$, exactly $\lambda(j)$ vertices from the side of $s$ and $\mu(j)$ vertices from the side of $t$ from path $j$ are informed; therefore, all $\ell_j$ internal vertices of every path are informed, and the entire graph is covered.  
Thus, $(G,s,br)$ is a yes-instance of \telebr.  

\smallskip
\noindent \emph{(Reverse direction).}  
Suppose $(G,s,br)$ is a yes-instance of \telebr with $br = m+1$. 
Without loss of generality, assume the scheme is \emph{non-lazy}, i.e., every informed vertex calls an uninformed neighbor whenever possible.  
Let $t_i$ denote the number of vertices newly informed in round $i$.  
At round~0, only $s$ is informed, so $t_0=1$.  
Suppose $t$ is informed at round $p \geq 1$.  
At each round $i$, only $s$, possibly $t$ (if $i>p$), and the vertices informed at round $i-1$ can make new calls.  
Thus $t_{i+1} \leq t_i + 1$ for $i \leq p$, and $t_{i+1} \leq t_i + 2$ for $i \geq p$.  
This upper bound value is maximized when $p=1$, i.e., when $s$ calls $t$ immediately.  
In that case, we have $t_0=1$, $t_1=1$, $t_2=2$, and for $i \geq 2$, $t_i = t_{i-1}+2$.  
Therefore, the total number of informed vertices by round $i$ is at most $1 + 1 + 2 + 4 + 6 + \cdots + 2(i-1) = i(i-1) + 2.$
For $i=m+1$, this reaches exactly $n = 2+m(m+1)$.  
Hence, the broadcast cannot finish in less than $m+1$ rounds, i.e., must finish in exactly $m+1$ rounds.  
This forces $s$ to call $t$ in round~1, and both $s$ and $t$ to call exactly one new neighbor in each subsequent round.  Otherwise, the number of informed vertices at time $t_i$ will be less than the above upper bound, and hence the broadcast cannot complete within $m+1$ rounds.

Consequently, each path of length $\ell_i$ must be partitioned into two subpaths, one informed from $s$ and the other from $t$, of lengths $\lambda(i)$ and $\mu(i)$ such that $\lambda(i)+\mu(i)=\ell_i$. Recall that $\ell_i = e-w_i$. Since both $s$ and $t$ inform exactly one new neighbor in each round, it follows that for any two distinct paths $i$ and $j$, we must have $\lambda(i) \neq \lambda(j)$ and $\mu(i) \neq \mu(j)$. In other words, $\lambda$ and $\mu$ are permutations of $[m]$. This corresponds exactly to a yes-instance of RN3DM.
\end{proof}

\begin{figure}
    \centering
    \scalebox{.43}{\melonNew} \vspace{-1mm}
    \caption{An instance $W=\{1,3,4,4\}$ of RN3DM reduces to the above instance of \telebr in \kcycle graphs (with $s$ as the originator and $br = 5$). The broadcast tree associated with with solution $\lambda = (4,1,3,2)$ and $\mu = (3,4,1,2)$ is highlighted. 
}
    \label{fig:pathReduction}
\end{figure}
\section{Efficient Polynomial Time Approximation Schemes}
\subsection{EPTAS for \Kcycle Graphs}
In this section, we present an efficient polynomial-time approximation scheme (EPTAS) for \telebr in \kcycle graphs.  
Recall that a \kcycle graph is obtained by joining $k$ cycles at a common central vertex $r$.  We assume that the number of cycles $k$—and hence the optimal broadcast time—is asymptotically large; for constant $k$, the problem can be solved optimally by a straightforward exhaustive search (see, e.g.,~\cite{vcevnik2017broadcasting}).


In what follows, we design an EPTAS for the case where the broadcast originates at $s=r$. 
This restriction is without loss of generality: any $c$-approximation for the case that the center is the originator (when $s=r$) can be extended to a $c$-approximation in the general case ($s\neq r$). Suppose the originator $s$ is a non-center vertex on some cycle $Y$, at distance $d \geq 1$ from $r$. The message can first be forwarded along the shortest path from $s$ to $r$. Once $r$ is informed, it first sends the message to its other neighbor on $Y$ and then proceeds with the $c$-approximation algorithm on the rest of the graph. If the overall broadcast time is determined by a vertex in $Y$, then the scheme is optimal. Otherwise, the process completes within $d + 1 + c \cdot \boldsymbol{br}^-$ rounds, where $\boldsymbol{br}^-$ is the optimal broadcast time excluding $Y$. Since $\bropt \geq d + \boldsymbol{br}^-$, this yields the same approximation factor $c$.

\paragraph*{From \telebr in \Kcycle Graphs to \twocover}

We introduce the following problem, which is the key part of our EPTAS algorithms. 

\begin{definition}\label{def:twocover}
The input to the \twocover problem is a \emph{ground multiset} $S=\{\ell_1,\ell_2,\ldots,\ell_n\}$ of positive integers. The objective is to determine the minimum integer $m$ such that the set $C=[m]$ can be partitioned into $n+1$ subsets $C_0, C_1,\ldots,C_n$, such that 
    for each $i\in[n]$, we have $|C_i|\leq 2$  and
 $\sum_{x\in C_i} x \,\geq\, \ell_i$. 
In this context, we say that $\ell_i$ is \emph{covered} by the elements of $C_i$, and we refer to $C$ as the \emph{covering set} of $S$.

\end{definition}

\begin{example}
When $S=\{13,8,7,6\}$, the optimal solution is given by $C=[8]$, and we have $C_1 = \{8,5\}, C_2 =\{7,2\}, C_3 = \{4,3\},  C_4 =\{6\}$ (and $C_0 = \{1\}$).    
\end{example}





We begin by establishing the equivalence between \telebr in \kcycle graphs and the \twocover problem. 


\newcommand{\kcyclePrefixEquivalence}[1]{
\begin{lemma}\label{lemma:kcycle-equal-twocover}
\emph{#1} Let $G$ be a \kcycle graph with central vertex $s$, where the $i$-th cycle contains $c_i$ additional vertices besides $s$, and let $S=\{c_1,\ldots,c_k\}$.  
Then \telebr for instance $(G,s)$ becomes equivalent to \twocover for $S$.  
\end{lemma}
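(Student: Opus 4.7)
The plan is to prove equivalence by giving a two-way translation between broadcast schemes on the $k$-cycle graph $G$ with originator $s$ and partitions $C_{0},C_{1},\ldots,C_{k}$ of $[m]$ that solve the \twocover instance on $S=\{c_{1},\ldots,c_{k}\}$. The key observation driving both directions is that the center $s$ has exactly two neighbors on each cycle, and on cycle $i$ the only efficient way to inform all $c_i$ additional vertices by round $m$ is for $s$ to launch up to two ``arms'' along the cycle, after which each arm propagates one vertex per round. If $s$ makes its two calls on cycle $i$ at rounds $a_i \le b_i$, then by round $m$ these arms cover $(m-a_i+1)+(m-b_i+1)$ vertices on that cycle; setting $x_i=m-a_i+1$ and $y_i=m-b_i+1$ gives the natural bridge to the two-element subset $C_i=\{x_i,y_i\}\subseteq[m]$.

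For the forward direction, I would start with an optimal broadcast finishing in $m$ rounds. Without loss of generality the scheme is non-lazy, and since $s$ has degree $2k$ it makes at most one call per round to a distinct neighbor. For each cycle $i$, let $C_i$ record the ``reversed'' rounds $\{m-a_i+1\}$ (and $\{m-b_i+1\}$ if a second call is made), and put the remaining rounds in $C_0$. Because the calls are at distinct rounds, the $C_i$ are pairwise disjoint and together with $C_0$ partition $[m]$. Since every vertex of cycle $i$ must be informed by round $m$, and the arms progress at one vertex per round along a cycle of length $c_i$ (excluding $s$), the covering inequality $\sum_{x\in C_i} x \ge c_i$ follows immediately (covering the case $|C_i|=1$ as well, where a single arm must sweep the entire cycle).

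For the reverse direction, I would take a \twocover partition of $[m]$ and build a schedule: for each $i$ with $C_i=\{x_i,y_i\}$, have $s$ call its two neighbors on cycle $i$ at rounds $m-x_i+1$ and $m-y_i+1$; for singletons, only one call is scheduled; and rounds in $C_0$ are idle for $s$. Disjointness of the $C_i$ guarantees that $s$ calls at most one new vertex per round, so the schedule is feasible for the telephone model. Every other vertex has degree $2$, and each informed non-central vertex simply forwards the message to its unique uninformed neighbor on the cycle in the next round. The constraint $\sum_{x\in C_i}x\ge c_i$ ensures that the two arms on cycle $i$ meet or overlap by round $m$, so all $c_i$ vertices of cycle $i$ are informed. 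Combining both directions, the optimal broadcast time equals the optimal \twocover value on $S$.

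The main conceptual obstacle is the bookkeeping for cycles of small size: when $c_i=2$ the cycle is a triangle and the two ``arms'' are really the same two vertices, so I must verify that the argument gracefully degenerates (any pair with sum at least $2$, or a singleton $x_i\ge 2$, works). A secondary subtlety is the case $|C_i|=1$, which must be allowed so that the reduction is tight even when one arm alone suffices. Neither is deep, but I would make both explicit to ensure the equivalence is truly an equality of optima rather than a pair of inequalities in different directions.
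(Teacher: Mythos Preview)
Your proposal is correct and follows essentially the same approach as the paper: both directions use the bijection $a \mapsto m-a+1$ between the rounds at which $s$ calls into cycle $i$ and the elements of $C_i\subseteq[m]$, with the covering inequality $\sum_{x\in C_i}x\ge c_i$ corresponding exactly to the feasibility of informing all $c_i$ non-central vertices of cycle $i$ by round $m$. Your treatment is in fact slightly more explicit than the paper's about the leftover set $C_0$, the singleton case $|C_i|=1$, and the degenerate small-cycle case, but these are refinements of the same argument rather than a different route.
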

}

\kcyclePrefixEquivalence{}

\begin{proof}[sketch]
Let $\br$ denote the number of broadcast rounds.  
If the center $s$ informs two neighbors of cycle $i$ at rounds $x$ and $y$, then the remaining $c_i$ vertices are informed within $\br$ rounds if and only if we have $(\br-x+1)+(\br-y+1)\geq c_i$.  
This is equivalent to covering $c_i$ by two elements of $[\,\br\,]$ whose sum is at least $c_i$.  
Thus, a broadcast schedule with $\br$ rounds corresponds to a feasible \twocover solution, and conversely any such covering yields a valid broadcast schedule. 
\end{proof}

\begin{proof}
We show that any feasible broadcasting scheme with $\br$ rounds, for instance $(G,s)$, corresponds to a feasible covering of $S$ using $[\br]$, and conversely any such covering yields a valid broadcasting scheme. 
This one-to-one correspondence ensures that the two problems have the same optimal solutions.

\medskip\noindent
\emph{From \telebr to \twocover:}  
In any broadcast scheme with $\br$ rounds, $r(=s)$ must call at most two vertices on each cycle.  
If $r$ informs neighbors in cycle $i$ at rounds $x$ and $y$, then the remaining $c_i-1$ vertices of cycle $i$ must be covered within the remaining rounds, which is feasible exactly when $(\br-x+1)+(\br-y+1)\geq c_i$.  
Thus, each $c_i$ is covered by the pair $\{\br-x+1,\br-y+1\}$ in the set $[\br]$.  
Since $r$ uses distinct rounds for distinct neighbors, the covering also respects the injectivity requirement.  
Hence, every broadcast scheme with $\br$ rounds yields a valid covering for $S$ with $[\br]$.  

\medskip\noindent
\emph{From \twocover to \telebr:}  
Conversely, suppose $S$ has a covering using $C=[\br]$.  
If $c_i$ is covered by $x$ and $y$, then in the broadcast scheme $r$ informs two neighbors in the $i$'th cycle at rounds $\br-x+1$ and $\br-y+1$.  
Because $x+y \geq c_i$, the message traverses all of $i$'th cycle within $\br$ rounds.  
Injectivity of the covering ensures that $r$ never attempts more than one call in a round.  
Thus, every covering for the \twocover instance corresponds to a valid broadcast scheme with $\br$ rounds.  

Therefore, \telebr in \kcycle graphs is equivalent to the \twocover problem. 
\end{proof}



\paragraph*{EPTAS for \twocover}\label{subsec:twocover}
In this section, we present an EPTAS for the \twocover problem, that is, an algorithm with approximation factor $(1+\epsilon)$ for a given $\epsilon > 0$. By Lemma~\ref{lemma:kcycle-equal-twocover}, this would give an EPTAS for \telebr in \kcycle graphs. 
As a high-level view of the algorithm, it addresses the covering problem through a two-stage approximation scheme that leverages the concept of rounded sets (see \Cref{def:roundset}) to reduce the problem complexity. The main idea is to transform the original ground multiset and the covering set into a simplified version with constant support (i.e., distinct elements), enabling formulation as a feasibility ILP with a constant number of variables. Solving this ILP  
yields to the desired EPTAS. 

The algorithm operates as follows.  
Given an instance $S$ of \twocover, we first construct a \emph{rounded set} $\roundset{p}{S}$, where each element of $S$ is rounded up to a larger value in $S$ so that the support of $\roundset{p}{S}$ has size $p$.  
Here $p$ is a constant integer chosen as a function of $\epsilon$. Since rounding increases item values, covering $\roundset{p}{S}$ may require a larger covering set.  
Nevertheless, we show that the optimal covering of $\roundset{p}{S}$ is within a factor $(1+2/p)$ of the optimal covering of $S$ (\Cref{proposition:optroundset}).  
Thus, it suffices to focus on covering $\roundset{p}{S}$ rather than $S$.  
For that, we use a binary search over candidate values $m$, where $m$ specifies the size of the covering set $C=[m]$.  
For each $m$, we define a rounded version $\roundset{p}{C}$, obtained by rounding up the elements of $C$ so that its support also has size $p$.  
We prove that if $\roundset{p}{S}$ can be covered using $\roundset{p}{C}$, then it can also be covered using $C'=[m']$ with $m' \leq (1+1/p)m$ (\Cref{claim:roundsetcover}).  
Conversely, if $\roundset{p}{S}$ cannot be covered by $\roundset{p}{C}$, then it cannot be covered by $C (=[m])$.  
Hence, verifying whether $\roundset{p}{C}$ successfully covers $\roundset{p}{S}$ 
is sufficient for approximating the original instance.  
This verification can be performed by solving an ILP over $\roundset{p}{S}$ and $\roundset{p}{C}$.  
Since both sets have constant support (bounded by $p$), this step can be carried out in time independent of $n$ (\Cref{lemma:exhaustive}).  
In what follows, we provide the details of the algorithm.



Based on \Cref{def:twocover}, an instance of \twocover can be defined as a tuple $(S, C)$, where $S=\{\ell_1,\ell_2,\ldots,\ell_n\}$ and $C = [m]$ are the ground multiset and the covering set, respectively. 
Throughout this section, we assume that $\ell_1 \geq \ell_2\geq \ldots \geq \ell_n$.  

\begin{definition}\label{def:roundset}
    Let $S$ be a finite set and let $p$ be a positive integer. The \emph{rounded set} $\roundset{p}{S}$ of $S$ with $p$ parts is constructed as follows:
    \begin{enumerate}
        \item Order the elements of $S$ in non-increasing sequence: $s_1 \geq s_2 \geq \cdots \geq s_{|S|}$.
        
        \item Partition this ordered sequence into $p$ consecutive parts $P_1, P_2, \ldots, P_p$, where:
        \begin{itemize}
            \item For $i = 1, 2, \ldots, p-1$: $P_i$ contains exactly $\lceil |S|/p \rceil$ elements,
            \item $P_p$ contains the remaining $|S| - (p-1)\lceil |S|/p \rceil$ elements.
        \end{itemize}
        
        \item For each part $P_i$, let $m_i$ denote the maximum element in $P_i$. Then $\roundset{p}{S}$ is the multiset obtained by replacing each element in $P_i$ with $m_i$ for all $i \in \{1, 2, \ldots, p\}$.
    \end{enumerate}
\end{definition}

\begin{example}\label{ex1}
    For
    $S = 
    \{123, 67, 65, 45, 
    43, 43, 43, 18, 
    12, 12, 10, 6, 
    4, 4, 1, 1\}$ and $p=4$, 
    we get
    $\roundset{4}{S} = 
    \{123, 123, 123, 123, 
    43, 43, 43, 43, 
    12, 12, 12, 12, 
    4, 4, 4, 4\}$. 
\end{example}

\begin{proposition}\label{proposition:optroundset}
    $\opt(\roundset{p}{S}) \leq \opt(S) + 2\lceil |S|/p\rceil$, where $\opt$ is the optimal cover for a specific instance of \twocover.
\end{proposition}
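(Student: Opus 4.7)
The plan is to construct an explicit cover of $\roundset{p}{S}$ whose covering set has size at most $\opt(S) + 2\lceil |S|/p\rceil$ by modifying an optimal cover of $S$. Set $m := \opt(S)$ and $q := \lceil |S|/p\rceil$, and fix pairwise-disjoint subsets $C_1,\ldots,C_n \subseteq [m]$ with $|C_j|\leq 2$ and $\sum_{x\in C_j} x \geq \ell_j$ that witness the optimal cover of $S$ (so that the union $C_0 \cup C_1 \cup \cdots \cup C_n$ equals $[m]$).

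The key observation is a ``shift by $q$'' argument. If position $j$ of $\roundset{p}{S}$ lies in part $P_i$ with $i \geq 2$, then $j-q$ lies in $P_{i-1}$, and hence $\ell_{j-q} \geq \min(P_{i-1}) \geq \max(P_i) = m_i$. Consequently, the original subset $C_{j-q}$, whose sum is at least $\ell_{j-q} \geq m_i$, is a valid cover for the value $m_i$ that $\roundset{p}{S}$ places at position $j$. I would therefore reassign $C_{j-q}$ to position $j$ for every $j > q$.

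After the reassignment, only the $q$ positions in $P_1$—all equal to $m_1$—remain to be covered. I would extend the covering set with the $2q$ fresh integers $\{m+1, \ldots, m+2q\}$ and pair them up as $\{m+2k-1,\, m+2k\}$ for $k \in [q]$. Each such pair sums to at least $2m+3$. Since $m_1$ was covered in the original solution by at most two elements of $[m]$, we have $m_1 \leq 2m-1 < 2m+3$, so every copy of $m_1$ is validly covered. The final covering set is $[m+2q]$, matching the stated bound.

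The remaining work is mostly bookkeeping: verifying that the constructed family is a valid cover. Pairwise disjointness among the reassigned $C_{j-q}$'s follows from injectivity of $j\mapsto j-q$ together with the disjointness of the original $C_i$'s, while the new pairs trivially live above $m$ and are disjoint from the rest. The one technicality, and the most likely place for a slip, is ensuring the shift argument works uniformly across all parts—in particular for the last part $P_p$, which may have fewer than $q$ elements. This is immediate once one notes that $|P_p|\leq q$ forces $j-q \in P_{p-1}$ for every $j \in P_p$, so no case distinction beyond $i=1$ versus $i\geq 2$ is needed.
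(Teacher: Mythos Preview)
Your proof is correct and follows essentially the same approach as the paper: both shift the optimal cover of $S$ by one block (you phrase it as the position-level map $j\mapsto j-q$, the paper phrases it as moving the covers of part $i$ to part $i+1$) and then cover the first block with $2q$ fresh elements, using the bound $\ell_1\le 2\opt(S)-1$. Your write-up is in fact a bit more careful about the bookkeeping, especially the handling of the short last part $P_p$, which the paper glosses over.
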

\begin{proof}
    By definition, $\roundset{p}{S}$ consists of $p$ parts.
    Given an optimal cover of $S$, form a cover of $\roundset{p}{S}$ by assigning the numbers given to the part $i$ in $S$ to the part $(i+1)$ in $\roundset{p}{S}$ for $i\in[p-1]$. As the size and value of part $i+1$ is not greater than the value of part $i$, the set of numbers that can cover part $i$ can also cover part $i+1$. 

    Therefore, the first part is the only part that needs to be covered. We show that adding an extra $2\lceil |S|/p \rceil$ to the covering set is enough to cover the first part. This is because every $\ell_i\in S$ is less than $2$ items of size $\opt(S)$, as otherwise it would not be possible to cover $\ell_i$ with at most two numbers in $[\opt(S)]$. Thus, by adding an extra $2 \lceil |S|/p \rceil$ items to the covering set, we can cover $\roundset{p}{S}$.
\end{proof}
\begin{example}[Based on \Cref{ex1}]
    In
    \Cref{ex1}, items in a covering set that were used to cover $\{123, 67, 65, 45\}$ in the optimal cover of $S$ will be used to cover $\{43, 43, 43, 18\}$ in $\roundset{p}{S}$; those used to cover $\{43,43,43,18\}$ will be used to cover $\{12,12,12,12\}$ in $\roundset{p}{S}$, and so on. The only remaining uncovered part in $\roundset{p}{S}$ will be the first part ($(123, 123, 123, 123)$), which has size $|S|/p$. 
    Moreover, every element in that part has size at most $2\opt(S)$. 
    Therefore, the first part in $\roundset{p}{S}$ can be covered by an extra $2|S|/p$ in the covering set. In the example, if $\opt(S) = 73$ in the above example, we can cover $\{123, 123, 123, 123\}$ by adding $\{74, 75, 76, 77, 78, 79, 80, 81\}$ to the optimal cover set of $S$ to cover $\roundset{p}{S}$ using $\opt(S) + 2|S|/p$ rounds. 
\end{example}

Given that $\opt(S) \geq |S|$, we can conclude that an optimal covering of $\roundset{p}{S}$ gives a ($1+2/p$) approximation for covering $S$. Therefore, we can now focus on covering $\roundset{p}{S}$, which has constant support $p$.
 

Let $m$ be a given value and let $C=[m]$.  
We present a procedure that either constructs a covering of $\roundset{p}{S}$ using the set $C' = [m']$, where $m' = m + \lceil m/p \rceil$, or certifies that $\roundset{p}{S}$ cannot be covered using $[m]$.  
For this purpose, we use the rounded set $\roundset{p}{C}$ to cover $\roundset{p}{S}$.  
We define the \emph{stretched covering} $\roundset{p}{C}$ as the set obtained by partitioning the $C$ and rounding up its elements (as in Definition~\ref{def:roundset}), ensuring that the covering set has constant support $p$.  
This enables a binary-search strategy to find the smallest feasible value of $m$.

\begin{example}\label{ex2}
    Suppose 
    $\roundset{p}{S} = \{123, 123, 123, 123, 43, 43, 43, 43, 12, 12, 12, 12, 4, 4, 4, 4\}$ and $p=3$.  
    For $m = 12$, we have
    $C= \{12, 11, 10, 9, 8, 7, 6, 5, 4, 3, 2, 1\}$, and we aim to cover $\roundset{p}{S}$ with
    $\roundset{p}{C} = \{12, 12, 12, 12, 8, 8, 8, 8, 4, 4, 4, 4\}$.
\end{example}

\begin{claim}\label{claim:roundsetcover}
    Given that we can cover $S$ with $\roundset{p}{C}$, where $C=[m]$, we can cover $S$ using $C'=[m']$, where $m'=m+\lceil m/p \rceil$.
\end{claim}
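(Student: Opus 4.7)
The plan is to convert a covering of the ground set by $\roundset{p}{C}$ into a covering by the actual integer set $C' = [m']$, with $m' = m + \lceil m/p \rceil$, via an explicit value-preserving substitution. Write $q := \lceil m/p \rceil$ for brevity. First, I would spell out the structure of $\roundset{p}{C}$: since $C = [m]$, sorting decreasingly and grouping as in \Cref{def:roundset} yields $p$ classes of size at most $q$ whose rounded values are $v_i = m - (i-1)q$ for $i = 1, \ldots, p$ (the last class may be shorter). Any feasible covering by $\roundset{p}{C}$ then consumes some $k_i \le q$ tokens of value $v_i$ from class $i$, with $\sum_i k_i \le |\roundset{p}{C}| = m$.

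Next, I would define a greedy renaming of these tokens to actual integers of $C'$: allocate the top $k_1$ integers $\{m', m'-1, \ldots, m' - k_1 + 1\}$ of $[m']$ to the class-$1$ tokens, the next $k_2$ largest unused integers to class-$2$ tokens, and so on through class $p$. Since $\sum_i k_i \le m < m'$, these reserved blocks are pairwise disjoint, so each pair of tokens that was covering some $\ell$ in the original scheme is replaced by a pair of distinct elements of $C'$, preserving the ``each element used at most once'' requirement.

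Finally, I would verify that every substituted integer dominates the rounded value it replaces, so the covering inequalities $\sum_{x \in C_i} x \ge \ell_i$ are preserved. The smallest integer allocated to class $i$ is $m' - \sum_{j \le i} k_j + 1$, and using $k_j \le q$ together with $m' = m + q$ gives
\[
m' - \sum_{j \le i} k_j + 1 \;\ge\; m + q - iq + 1 \;=\; m - (i-1)q + 1 \;>\; v_i.
\]
Hence each substitution only increases the token value, so the resulting assignment is a valid cover using $C' = [m']$. I do not expect a serious obstacle; the only delicacy is bookkeeping around the possibly short last class of $\roundset{p}{C}$ and checking that the running sum $\sum_{j \le i} k_j$ is bounded by $iq$ so that the shift $m' - m = q$ is genuinely enough, which is exactly what the displayed inequality makes tight.
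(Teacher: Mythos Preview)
Your proposal is correct and follows essentially the same approach as the paper: both arguments exhibit an injection from the tokens of $\roundset{p}{C}$ into $C'=[m']$ that dominates the rounded values, with the key inequality being that a uniform shift of $q=\lceil m/p\rceil$ suffices because each class contains at most $q$ tokens. The only cosmetic difference is that the paper maps all $m$ positions of $\roundset{p}{C}$ uniformly to $m'-i+1$, whereas you greedily pack only the used tokens at the top of $[m']$; your version is slightly more explicit about distinctness, but the substance is the same.
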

\begin{proof}
    Based on definition, $\roundset{p}{C}$ contains $p$ parts. 
    Suppose we have a covering with $\roundset{p}{C}$ for $S$.
    Associate value $m+\lceil m/p \rceil - (i-1)$ to the $i$'th largest number in $C$ for $i\in[m]$. This transformation only increases the value as the first block with original value $m$ matches to $m+1, \ldots, m+\lceil m/p \rceil$, the second block matches to $m-\lceil m/p \rceil +1 ,\ldots, m$, and so on. More specifically, the $i$'th block matches to $m-(i-1)\lceil m/p \rceil +1 ,\ldots, m-(i-2)\lceil m/p \rceil$, which is greater than the original value of the block. 
    Now, as the new set is a subset of $C'$ and every value in $C$ is larger than its original value, the same covering can cover $S$ with the corresponding values in $C'$. 
\end{proof}
\begin{example}[Based on \Cref{ex2}]
    If we can cover $\roundset{p}{S}$ using
    $\roundset{p}{C} = \{ 12, 12, 12, 12, 8,$ $ 8, 8, 8, 4, 4, 4, 4\}$, we can cover $\roundset{p}{S}$ using
    $C'=\{16, 15, 14, 13, 12, 11, 10, 9, 8, 7, 6, 5, 4, 3, 2, 1\}$: items of $\roundset{p}{S}$ that were covered by $(12, 12, 12, 12)$ in $\roundset{p}{C}$ can be covered with $(16, 15, 14, 13)$ in $C'$, whatever was covered by $(8,8,8,8)$ in $\roundset{p}{C}$, can be covered by $(12, 11, 10, 9)$ in $C'$, etc. Hence, if we can cover $\roundset{p}{S}$ with $\roundset{p}{C}$, we can cover $\roundset{p}{S}$ using $C'$; which means we can cover $\roundset{p}{S}$ using $[m']$. 
\end{example}




\newcommand{\lemmaILPFlower}[1]{
\begin{lemma}\label{lemma:exhaustive}
\emph{#1} Given two sets $\roundset{p}{S}$ and $\roundset{p}{C}$, there exists an algorithm that, in time  $p^{\oh(p^3)}\cdot \poly(|S|+|C|)$, either outputs a covering of $\roundset{p}{S}$ using $\roundset{p}{C}$ (if one exists) or correctly reports that no such covering is possible.
\end{lemma}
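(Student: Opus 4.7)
The plan is to exploit the fact that both $\roundset{p}{S}$ and $\roundset{p}{C}$ have at most $p$ distinct values. I would write $\roundset{p}{S}$ as distinct values $s_1>\cdots>s_p$ with multiplicities $n_1,\ldots,n_p$, and $\roundset{p}{C}$ as distinct values $c_1>\cdots>c_p$ with multiplicities $m_1,\ldots,m_p$. Any feasible covering assigns each item of $\roundset{p}{S}$ to a \emph{cover slot}, which is either a singleton $\{c_k\}$ or an unordered pair $\{c_k,c_\ell\}$ of $C$-types whose combined value is at least the item's value. Since items of the same $C$-type are interchangeable, the number of distinct slot types is at most $T = p+\binom{p+1}{2}=\oh(p^2)$.

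Next, I would encode any covering as a non-negative integer matrix $X\in\mathbb{Z}_{\geq 0}^{p\times T}$, where $X_{i,t}$ counts items of $S$-type $s_i$ assigned to slots of type $t$. The feasibility conditions are then: (i) $X_{i,t}=0$ whenever slot $t$ does not cover $s_i$; (ii) $\sum_t X_{i,t}=n_i$ for every $i$; (iii) $\sum_{i,t} r_k(t)\,X_{i,t}\le m_k$ for every $k$, where $r_k(t)\in\{0,1,2\}$ is the multiplicity of $c_k$ in slot type $t$. The algorithm would then iterate over every subset $B\subseteq[p]\times[T]$ describing where $X$ is allowed to be positive; since $|[p]\times[T]|=\oh(p^3)$, there are at most $2^{\oh(p^3)}$ such subsets to consider.

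For each candidate support $B$, the remaining problem is a small integer feasibility question with $|B|=\oh(p^3)$ variables and $2p$ linear constraints whose coefficients are bounded constants, and whose right-hand sides $n_i,m_k$ (read once in $\oh(|S|+|C|)$ time) can be as large as $|S|$ or $|C|$. I would resolve this either by standard fixed-dimension integer programming or, more directly, by modeling it as a bipartite supply–demand network routed between the $p$ $C$-type supplies and the $p$ $S$-type demands through the $\oh(p^2)$ slot types; a feasible integer solution (if one exists) can then be extracted in $\oh(|S|+|C|)$ time via standard flow techniques. Once a feasible $X$ is found, the explicit covering of $\roundset{p}{S}$ by $\roundset{p}{C}$ is reconstructed in $\oh(|S|+|C|)$ time by assigning items in bulk according to the entries of $X$. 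If no $B$ yields feasibility, the algorithm reports that no covering exists.

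The principal obstacle will be the integer feasibility step: showing that, per support $B$, integrality can be decided in $\oh(|S|+|C|)$ time despite the coefficients $r_k(t)\in\{0,1,2\}$ preventing a direct appeal to total unimodularity. The remedy I have in mind is to split each "pair" slot into two unit-contribution halves, thereby exposing an underlying transportation structure on a network of constantly many (in $p$) nodes but with capacities of order $|S|+|C|$; standard min-cost/feasibility flow on such a network runs in $\oh(|S|+|C|)$ time. Combined with the $2^{\oh(p^3)}$ enumeration over supports and an outer $\oh(|S|)$ factor for bookkeeping in the final certificate, this yields the claimed $\oh(2^{p^3}\,|S|\,(|S|+|C|))$ bound; the rest of the argument (bounding the number of slot types, the number of supports, and reconstructing the covering) follows immediately from the constant-support structure of $\roundset{p}{S}$ and $\roundset{p}{C}$.
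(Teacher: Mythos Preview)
Your starting point—encoding a covering by the count matrix $X \in \mathbb{Z}_{\geq 0}^{p \times T}$ with $T = O(p^2)$ slot types—is the same as the paper's. From there the paper is much more blunt: it simply enumerates all such matrices $X$ outright. Each of the $O(p^3)$ entries lies in $\{0,1,\dots,|S|\}$, so a matrix is describable in $p^3\log|S|$ bits; the paper uses this to bound the number of candidates and checks each one for feasibility in linear time. There is no support enumeration and no auxiliary feasibility subproblem.

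Your route is genuinely different, and the gap sits exactly where you anticipate it. Enumerating the supports $B$ does not buy you integrality: for a fixed $B$, the residual system is still an ILP with coefficients in $\{0,1,2\}$, and the proposed remedy—splitting each pair slot $\{c_k,c_\ell\}$ into two unit halves—does not yield a transportation problem. The two halves of a given pair slot must be used exactly the same number of times, and this coupling is precisely a gain constraint: one unit of $S$-demand routed through a pair-slot node consumes two units of $C$-supply. You are therefore in generalized-flow territory, where integrality of optimal solutions is not automatic, and the claim that integral feasibility can be decided in $O(|S|+|C|)$ time per support is not justified as written. Two clean fixes: either drop the support enumeration and invoke fixed-dimension integer programming (Lenstra's algorithm) on the single system with $O(p^3)$ variables and $O(p)$ constraints, which runs in time $f(p)\cdot\mathrm{poly}(\log(|S|+|C|))$ and is well within the target bound; or, more simply, follow the paper and brute-force the entries of $X$ directly.
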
}

\lemmaILPFlower{}

\begin{proof}[sketch]
We model the covering problem as a feasibility instance of integer linear programming
with one variable for each possible way of covering a distinct element in $\roundset{p}{S}$ using one or two elements in $\roundset{p}{C}$. Given that both the ground multiset $\roundset{p}{S}$ and the covering multiset $\roundset{p}{C}$ have constant support $p$, there are only $\oh(p^3)$ such variables.
The constraints enforce that (i) every ground element is covered exactly once,
(ii) no covering type is used more than its multiplicity,
and (iii) only feasible type combinations are allowed.
%
By Lenstra’s algorithm, the resulting integer linear feasibility can be solved in time
$p^{O(p^3)}\cdot \poly(|S|+|C|)$.
Thus, we can either find a feasible covering or correctly conclude that none exists.
\end{proof}

\begin{proof}
Let $V$ and $T$ be the set of distinct elements in $\roundset{p}{S}$ and $\roundset{p}{C}$, respectively.
Then $|V|\le p$ and $|T|\le p$.
For each pair $\{i,j\}$ of items in $T$ and each ground type $v\in V$,
introduce an integer variable $x_{i,j,v}\in \mathbb{Z}_{\ge 0}$,
intended to denote how many times the pair $(i,j)$ is used to cover elements of type $v$. Similarly, for each element $i\in T$ and ground type $v\in V$, we let $x_{i,v} \in \mathbb{Z}_{\ge 0}$ to denote the number of times $i$ is used to cover elements of type $v$.
There are $\big({{|T|}\choose{2}}+|T|\big) |V|\le p^3$ variables.

We write a feasibility ILP whose constraints enforce that
(i) every ground element is covered exactly once, and
(ii) cover-items are not used more than their multiplicities.

\smallskip
\noindent
\emph{Cover all ground elements.}
For each $v\in V$, we let $\mu_S(v)$ be the multiplicity of $v$, that is, $\mu_S(v) := |\{ s\in \roundset{p}{S} : s = v \}|$. To ensure $v$ is covered exactly once, we add a constraint:

\begin{equation}\label{eq:cover-demand}
\sum_{i,j\in T} x_{i,j,v} + \sum_{i\in T} x_{i,v} \;=\; \mu_S(v).
\end{equation}

\smallskip
\noindent
\emph{Do not overuse cover-items.}
For each $t\in T$, let $\mu_C(t)$ be the multiplicity of type $t$ in $\roundset{p}{C}$, that is, $\mu_C(t) := |\{ s\in \roundset{p}{C} : s = t \}|$.
Each time we use a pair $(i,j)$ we consume one item of type $i$ and one item of type $j$
(with the understanding that if $i=j$ then we consume two copies).
Thus, for each $t\in T$ we impose
\begin{equation}\label{eq:cover-supply}
\sum_{v\in V}\ \big( \sum_{i,j\in T}
\bigl(\mathbf{1}[i=t] + \mathbf{1}[j=t]\bigr)\, x_{i,j,v} + \sum_{i\in T}
\bigl(\mathbf{1}[i=t] \big) \, x_{i,v}\bigr)
\;\le\; \mu_C(t).
\end{equation}
Here, $\mathbf{1}[j=t]$ denotes the indicator function, which is equal to $1$ if $j=t$ and $0$ otherwise.

 \smallskip
 \noindent
 \emph{Type-feasibility.}
 Finally, we forbid pairs that cannot cover type $v$ by setting $x_{i,j,v}=0$ whenever $(i,j)$ is not a valid set for covering $v$, that is, $i+j < v$. 
 Similarly, we force $x_{i,v}=0$ when $i<v$. 

\smallskip
\noindent
This ILP has at most $p^3$ integer variables, and the number of constraints is
$\oh(p^3)$.

By Lenstra's theorem~\cite{lenstra1983integer} (and subsequent improvements by Kannan~\cite{kannan1987minkowski}),
integer linear feasibility with a fixed number $d$ of variables can be solved in time $d^{\oh(d)}\cdot \poly(L)$, where $L$ is the encoding length of the ILP.
Here $d\le p^3$ and $L=\poly(|S|+|C|+p)$, hence the feasibility test runs in
$p^{\oh(p^3)}\cdot \poly(|S|+|C|)$ time.
If feasible, the resulting integer solution specifies a valid covering; otherwise, no such solution exists.
\end{proof}

To summarize, given a covering guess size $m$ and ground set $\roundset{p}{S}$, we form $\roundset{p}{C}$, and check whether we can cover $\roundset{p}{S}$ using $\roundset{p}{C}$ with a feasibility ILP (\Cref{lemma:exhaustive}). If we cannot find a covering, we return bad-guess; we cannot cover $\roundset{p}{S}$ using $[m]$. If we find a solution, use the covering given by $\roundset{p}{C}$ to cover $\roundset{p}{S}$ using $[m']$, where $m'=m+\lceil m/p\rceil$. 
This gives a $(1+1/p)$-approximation for covering $\roundset{p}{S}$. Applying the 
\Cref{proposition:optroundset}, we can get an approximating factor of $(1+2/p)(1+1/p)$, which can be as close as we want to $1$ by setting $p$ large enough.
Using binary search adds a multiplicative factor of $\log \opt$ to the runtime we used for ILP (see \Cref{lemma:exhaustive}). Hence, the overall running time is 
$p^{O(p^3)}\cdot \poly(|S|+\opt)$.

We can conclude the following:

\newcommand{\kcycleFinalThm}[1]{
\begin{theorem}\label{thm:twocover}
    \emph{#1} There is a $(1+\epsilon)$-approximation for the \twocover problem that terminates in 
    $p^{O(p^3)}\cdot \poly(|S|+\opt)$,
    where $p=3/\epsilon^2$. 
\end{theorem}
}
\kcycleFinalThm{}
\begin{proof}
    Given an instance $S$ of \twocover with $\epsilon > 0$, set $p = \lceil 3/\epsilon^2 \rceil$ to ensure $(1+2/p)(1+1/p) \leq 1+\epsilon$. Construct $\roundset{p}{S}$ by rounding up elements in $S$. 
    We use binary search over $m$ to find $\opt_p(\roundset{p}{S})$, where $\opt_p(S)$ is the minimum $m$ such that $S$ can be covered using $\roundset{p}{[m]}$. For each candidate set $[m]$:
    \begin{enumerate}
        \item Construct $\roundset{p}{C}$ where $C=[m]$ as per \Cref{def:roundset}.
        \item Use the feasibility ILP
        (\Cref{lemma:exhaustive}) 
        to determine if $\roundset{p}{C}$ can cover $\roundset{p}{S}$ in time $\oh(2^{p^3}|S|(|S|+m))$.
    \end{enumerate}

Therefore, the binary search routine gives the smallest $m$ such that $\roundset{p}{S}$ can be covered using $\roundset{p}{[m]}$.
 %
  %
    We construct a covering of $\roundset{p}{S}$ using $C'=[m']$, where $m' \leq (1+1/p)m = (1+1/p)\opt_p(\roundset{p}{S}) \leq (1+1/p)\opt(\roundset{p}{S})$ by \Cref{claim:roundsetcover}.
    Given that $C'=[m']$ covers $\roundset{p}{S}$,  
    it covers $S$ as well. Using \Cref{proposition:optroundset}, we have $m' \leq (1+1/p)\opt(\roundset{p}{S})\leq (1+1/p)(1+2/p)\opt(S)$.   
    The running time is $\oh(\log \opt)$ binary search iterations, each taking 
    $p^{O(p^3)}\cdot \poly(|S|+\opt)$
    time, giving total complexity 
    $p^{O(p^3)}\cdot \poly(|S|+\opt)$
    where $p^3 = \oh(1/\epsilon^6)$.
\end{proof}

\paragraph*{EPTAS for \telebr in Cycle-Star graphs}

Given the equivalence between \telebr in \kcycle graphs and the \twocover problem, as established in \Cref{lemma:kcycle-equal-twocover}, along with the EPTAS presented in Theorem~\ref{thm:twocover} for the  \twocover problem, we can conclude the following.

\begin{corollary}\label{Corol:PTASkCycle}
There exists an EPTAS for \telebr in \kcycle graphs.  
In particular, for a \kcycle graph on $n$ vertices and any $\epsilon > 0$, the algorithm computes a $(1+\epsilon)$-approximate broadcast schedule in time $O(2^{1/\epsilon^6}\, k(k+n)\log n)$.
\end{corollary}

\noindent \textbf{Extension to multi-source broadcasting} \vspace{1mm} \\
The EPTAS of Corollary~\ref{Corol:PTASkCycle} extends naturally to the multi-source broadcasting setting. First, we provide a sketch of this extension. 

The main idea is to guess the optimal broadcast time $\br$ and, for each guess, attempt to construct a broadcast schedule. If $\br=\bropt(G,S)$, the procedure returns a schedule of length at most $(1+\epsilon)\bropt(G,S)+1$. Trying all relevant guesses for $\br$ and selecting the shortest resulting schedule yields an EPTAS. 

For each fixed guess~$\br$, we reduce the multi-source instance to a single-source instance as follows. We first perform a preprocessing step that removes a window of at most $\br$
vertices around each non-center source.
These vertices can always be informed within $\br+1$ rounds directly from their respective sources.
If this preprocessing disconnects the graph or leaves a vertex at a distance larger than $\br$
from every source, we reject the guess value $\br$.
Otherwise, the remaining graph $G'$ is a graph of distance 1 to a path forest (See Figure~\ref{fig:theirfig}). This graph has the center as the unique bottleneck through which all remaining vertices must be informed.
We therefore reduce the problem to a single-source instance $(G',s)$, where $s$ is the center.
This instance is transformed into a modified version of the \twocover problem in which cycle components allow one or two-element covers as before, while path components (created by preprocessing) correspond to ground elements that must be covered by a single item.
The rounding and approximation steps from the single-source EPTAS remain unchanged, and the resulting rounded instance of \twocover has constant support.
It can therefore be solved exactly using a fixed-dimension ILP as before.  

Now, we present details on the extension of the EPTAS of Corollary~\ref{Corol:PTASkCycle} to the multi-source broadcasting setting. 

Suppose we are given an instance $(G,S)$, where $S$ is the set of all sources, and a fixed parameter $\epsilon>0$. Let $\br$ be a broadcast-time guess value. We describe a procedure that, 
when $\br=\bropt(G, S)$, outputs a valid broadcasting scheme that completes within $(1+\epsilon)\br+1$. Otherwise, it returns either \textsc{Bad-Guess} or a scheme with no guarantee on its broadcast time. 
As we do not know the optimal broadcast time, we iterate over all $\br \in [\lceil \log n \rceil, n] $ and take the scheme with the best broadcasting time among all returned schemes. This way, we can achieve broadcast time $(1+\epsilon)\bropt(G, S) + 1$.

\smallskip
\noindent
\emph{Preprocessing.}
For each non-center source $s_i$, we remove all vertices that lie within distance $\br$ of $s_i$
on either of the two paths connecting $s_i$ to the center, excluding the center itself.
(The vertex $s_i$ is removed as well.)
If the center is not initially a source, let $s_m$ be the source closest to the center,
at distance $d_{\min}$, and remove up to $\br-d_{\min}$ additional vertices along the remaining path between
the center and $s_m$.  \\
Let $G'$ denote the remaining graph.
If $G'$ is disconnected, or if $d_{\min}>\br$, then there exists a vertex whose distance from every source
exceeds $\br$, and we correctly return \textsc{Bad-Guess}. 
Otherwise, $G'$ will be a graph of distance $1$ to path forests (see Figure~\ref{fig:theirfig}).\\  
The preprocessing removes a window of at most $\br$ vertices around each non-center source.
Note that all removed vertices can be informed within $\br+1$ rounds by broadcasting from their respective sources.

\smallskip
\noindent
\emph{Reduction to covering.}
Consider an instance $(G',s)$ of single-source \telebr with the center being the source. 
We associate $(G',s)$ with an instance $\mathcal{S}$ of the \twocover problem,
defined analogously to Lemma~\ref{lemma:kcycle-equal-twocover} for cycle-star graphs.
The only difference is that some elements of the ground multiset $\mathcal{S}$,
which we refer to as \emph{path elements},
correspond to paths in $G'$ rather than cycles.
Each path element must be covered by a subset of size exactly~$1$ (it cannot be covered with two elements from the ground set). \\
This modified covering instance can be solved in the same way as discussed in Section~\ref{subsec:twocover}. In particular, the rounding steps remain unchanged. The only additional constraint arises in the final step
(Lemma~\ref{lemma:exhaustive}),
where we ensure that each path element in the rounded ground set is covered by a single-item subset.
This produces a broadcast schedule for $(G',s)$ that completes within
$(1+\epsilon)\,\bropt(G',s)$ rounds.

\smallskip
\noindent
\emph{Combining the solutions.}
Overall, broadcasting in the original multi-source instance $G$ completes within
\[
\max\!\left\{\, \br+1,\ d_{\min} + 1 +(1+\epsilon)\,\bropt(G',s) \,\right\}.
\]
The term $\br+1$ accounts for informing all vertices removed during the preprocessing step. The second term captures the time required to inform the center and its neighbor along the path toward $s_m$ (namely, $d_{\min}+1$ rounds), followed by the execution of the single-source broadcast within $G'$.

Suppose $br = \bropt(G,S)$. Now, if $G'$ has no vertices other than the center, then we have found a broadcasting schedule with $\br+1 = \bropt(G, S)+1$ rounds. Otherwise, in the optimal scheme, vertices in $G'$ must be informed through the center. 
This is because the vertices in $G'$ are the ones that were not removed when forming $G'$ and thus are at a distance further than $\br=\bropt(G, S)$ from the sources in their respective cycles.
Therefore, as the center cannot be informed earlier than round $d_{\min}$, we can write $\bropt(G, S)\ge d_{\min} +\bropt(G', s)$.
Hence, the broadcast time of our scheme is at most
\[
\max\!\left\{\, \br+1,\ d_{\min} + 1 +(1+\epsilon)\,\bropt(G',s) \,\right\} \leq (1+\epsilon)\bropt(G,S)+1.
\]

\subsection{EPTAS for \Kpath Graphs}
In this section, we present an efficient polynomial-time approximation scheme (EPTAS) for \telebr in \kpath graphs. Consider a \kpath graph $G$ with endpoints $s$ and $t$.
In what follows, we assume that $s$ is the designated source, holding the message at time~0, and that $t$ acts as a \emph{late source}, receiving the message at time~$\alpha$, with $\alpha \leq d(s,t)$, where $d(s,t)$ denotes the distance between $s$ and $t$ in $G$.
We show that any instance of the (single-source) problem with an arbitrary originator can be reduced to this double-source setting (with $s$ as the source and $t$ as the late source) without changing the approximation factor (\Cref{lem:general_case_reduction}).

In Lemma~\ref{lemma:kpath-le-twocovertwo}, we establish an equivalence between double-source \telebr and the integer covering problem that we call \twocovertwo (\Cref{def:twocovertwo}).

Finally, by applying a rounding-and-covering approach similar to the one used for \twocover in \kcycle graphs, we obtain an EPTAS for \twocovertwo, and consequently for \telebr in \kpath graphs (Section~\ref{section:twocovertwo}).
In what follows, we describe these steps in detail. First, we prove the following lemma, which implies we can restrict our attention to the double-source \telebr.

\newcommand{\kpathSpecialCaseLemma}[1]{
\begin{lemma}
\label{lem:general_case_reduction}
\emph{#1} Any $c$-approximation algorithm for \telebr in \kpath graphs in the double-source setting—where $s$ is the primary source and $t$ is a late source receiving the message at time $\alpha$—can be extended to a $c$-approximation algorithm for the (single-source) general setting, where the originator $v_o$ may be any vertex of the graph.
\end{lemma}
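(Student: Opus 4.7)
The plan is to split the reduction into two cases based on the position of the originator $v_o$ in the $k$-path graph $G$, and in each case recover a $c$-approximation by invoking the double-source algorithm as a black box a polynomial number of times.

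\textbf{Case 1: $v_o \in \{s, t\}$.} By symmetry I assume $v_o = s$. First I would observe that in any single-source schedule rooted at $s$, the vertex $t$ is informed at some specific round $\alpha^*$, and $\alpha^* \in \{d(s,t), d(s,t)+1, \ldots, n\}$. The plan is to enumerate all polynomially many candidate values of $\alpha$, run the assumed $c$-approximation double-source algorithm with late-source parameter $\alpha$ for each, and output the shortest resulting schedule. When $\alpha = \alpha^*$, the returned schedule has length at most $c$ times the optimal double-source time with that parameter, which in turn is at most $\bropt$; so the enumeration yields a $c$-approximation overall.

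\textbf{Case 2: $v_o \notin \{s, t\}$.} Then $v_o$ is an internal vertex of some path $P_i$, at distances $a$ and $b$ from $s$ and $t$ respectively (with $a+b$ equal to the length of $P_i$). Since $v_o$ has exactly two neighbors in $G$ (both on $P_i$), I would try both orderings of its first two calls. For the ``$s$-first'' ordering, the wavefronts along $P_i$ reach $s$ at round $a$ and $t$ at round $b+1$, and $P_i$'s interior is fully informed by round $\max(a, b+1)$. After shifting the clock by $\min(a, b+1)$, broadcasting over the remaining graph $G' = G \setminus (P_i \setminus \{s,t\})$ (a $(k-1)$-path graph) is exactly a double-source instance with source $s$ and late source $t$ arriving $|a-b-1|$ rounds later. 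I would run the double-source $c$-approximation on $G'$, then append its output to $v_o$'s initial two calls and the forced propagation along $P_i$. The total broadcast time is the maximum of the $P_i$-propagation time and the shifted double-source time on $G'$, and the better of the two orderings is returned.

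\textbf{Approximation analysis.} Any optimal schedule rooted at $v_o$ decomposes into (i) the unavoidable propagation along $P_i$ (or, in Case 1, along the $s$-$t$ path that first carries the message to $t$) and (ii) a broadcast of the remaining subgraph that, relative to the moment $s$ and $t$ become informed, is itself a double-source instance. Component (i) contributes an additive lower bound on $\bropt$ that both the algorithm and the optimum pay exactly, while component (ii) is bounded below by $\bropt$ minus that same additive shift. Because the double-source subroutine returns a schedule within a factor $c$ of its own optimum, taking the maximum of (i) and $c$ times (ii) yields $\textsc{alg} \leq c \cdot \bropt$.

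\textbf{Main obstacle.} The subtle point is that the double-source model relieves $s$ from spending a round to forward the message toward $t$: in the single-source reality, one of $s$'s rounds must be devoted to that. The plan is to handle this via a standard exchange argument showing that in the double-source schedule returned for the optimal $\alpha^*$ (Case 1) or for the induced subproblem on $G'$ (Case 2), $s$'s calls can be reordered so that its first call is directed along a path that can genuinely carry the message to $t$ by round $\alpha$, without increasing the overall length. Verifying that this reordering is always possible in a $k$-path graph—and hence that the double-source schedule can be faithfully realized as a single-source schedule with the same makespan—is the most delicate step of the argument.
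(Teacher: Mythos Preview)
Your approximation analysis in Case~2 rests on the claim that the $P_i$-propagation time $\max(a,b+1)$ is a lower bound on $\bropt$. This is false: the optimal schedule can inform $t$ via a short detour through $s$ and another path $P_j$, and then have $t$ push a wavefront back into $P_i$ from the far end, so the long side of $P_i$ is covered from both directions and finishes in roughly half the time you allocate. Concretely, take $v_o$ adjacent to $s$ on a path $P_1$ with $b=d(v_o,t)=100$, add a path $P_2$ with only two internal vertices, and one more path $P_3$ of moderate length. The optimal informs $s$ at round~$1$, $s$ relays along $P_2$, $t$ is informed by round~$4$, and then $s$ and $t$ together attack $P_1$ from both ends, finishing around round~$52$. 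Your scheme, by contrast, waits for the $P_1$-wavefront from $v_o$ to reach $t$ at round~$101$, so its output is at least~$101$ even when the double-source subroutine is exact ($c=1$); trying the other ordering of $v_o$'s first two calls does not help.

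The paper's reduction builds in exactly this missing mechanism: once $s$ is informed, it is \emph{forced} to forward first along a shortest $s$--$t$ path $P_j$, and the double-source subproblem is then posed on $G\setminus\{P_i,P_j\}$ with the (small) delay induced by that detour; $t$ is also forced to make one call back into $P_i$. Your ``main obstacle'' paragraph worries about a comparatively minor issue---the single round $s$ spends forwarding---whereas the decisive ingredient you are missing is the shortcut-through-$P_j$ routing and the corresponding removal of $P_j$ from the subinstance. Without it, your Case~2 does not yield a $c$-approximation.
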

}
\kpathSpecialCaseLemma{}

\begin{proof}[sketch]
 Without loss of generality, assume $v_o$ is closer to $s$ than to $t$.
At round $1$, $v_o$ sends the message along its path toward $s$, and at round $2$, it sends the message along the path toward $t$.
Once $s$ receives the message, it immediately forwards it along its shortest path toward $t$.
Finally, when $t$ receives the message, it first transmits it back along its path toward $v$ (to ensure that the neighbor is informed if it has not already been).
These enforced decisions affect the broadcast time only by a constant number of rounds (the approximation remains the same).
If $s$ is informed at time $\tau$, then $t$ is informed at time $\tau' = \tau + \alpha$.
\end{proof}

\begin{proof}
    Let $v_o$ be the originator vertex lying in path $P_i$ at distances $d(v_o,s)$ and $d(v_o,t)$ from endpoints $s$ and $t$, respectively. Without loss of generality, assume $d(v_o,s) \leq d(v_o,t)$. 
    The message is first forwarded along the shortest path from $v_o$ to $s$, which takes $d(v_o,s)$ rounds. At round $d(v_o,s)+1$, vertex $s$ is informed and can begin broadcasting. Vertex $t$ is informed either directly through $v_o$ (after $d(v_o,t)$ rounds) or through $s$ via some other path $P_j$ (whichever occurs first). Let $\alpha$ denote the delay with which $t$ receives the message relative to $s$. Once both $s$ and $t$ are informed, they proceed with the $c$-approximation algorithm on the remaining graph.

    We consider two cases depending on how $t$ receives the message:

    \textbf{Case I: $t$ is informed directly through $v_o$.} 
    Here $\alpha = d(v_o,t) - d(v_o,s) + 1$.
    If the global broadcast time is determined by a vertex in $P_i$, then the scheme is optimal. Otherwise, the process completes within $d(v_o,s) + c \cdot \br^-(\alpha)$ rounds, where $\br^-(\alpha)$ is the optimal broadcast time for the subproblem excluding $P_i$ with sources $s$ and $t$ where $t$ has delay $\alpha$. We have $\br \geq d(v_o,s) + \br^-(\alpha-1)\geq d(v_o,s) -1+ \br^-(\alpha)$ because $v_o$ can inform its neighbor toward $t$ in its first round, which results in $t$ getting informed one round sooner.
    Therefore, this yields the same approximation factor $c$.

    \textbf{Case II: $t$ is informed through $s$ via another path $P_j$.}
    Let $\ell_j$ be the number of vertices of path $P_j$ (excluding $s$ and $t$). Then $t$ is informed at round $d(v_o,s) + 1 + \ell_j$, giving delay $\alpha = \ell_j$.
    If the global broadcast time is determined by a vertex in $P_i$ or $P_j$, then the scheme is optimal. Otherwise, the process completes within $d(v_o,s) + 1 + c \cdot \br^-(\alpha)$ rounds, where $\br^-(\alpha)$ is the optimal broadcast time for the subproblem excluding $P_i$ and $P_j$ with sources $s$ and $t$ where $t$ has delay $\alpha$. Since $\br \geq d(v_o,s) + 1 + \br^-(\alpha)$, this yields the same approximation factor $c$.
\end{proof}

\paragraph*{From \telebr in Melon Graphs to \twocovertwo}
We introduce the following problem, which is the key part of our EPTAS. 

\begin{definition}\label{def:twocovertwo}
The input to the \twocovertwo problem is a \emph{ground multiset} $S=\{\ell_1,\ell_2,\ldots,\ell_n\}$ of positive integers together with a non-negative integer $\beta$. 
The objective is to determine the minimum integer $m$ such that there exist functions  
$c : S \to \{0\}\cup[m]$ and $d : S \to \{0\}\cup[m-\beta]$  
satisfying $c(i) + d(i) \geq \ell_i$ for all $i \in [n]$.  
Moreover, injectivity is required in the following sense: for any $i \neq j$, if both $c(i)> 0$ (respectively, $d(i)>0$), then $c(i)\neq c(j)$ (respectively $d(i)\neq d(j)$).  
In this setting, we say that each $\ell_i$ is \emph{covered} by the pair $\{c(i),d(i)\}$, and we refer to $C=[m]$ and $D=[m-\beta]$ as the \emph{covering sets} of $S$. 
\end{definition}

\begin{example}
    When $S=\{8,4,4,2\}$ and $\beta = 2$, the optimal solution is given by $C=[5]$, and we have $c(1)=5, d(1)=3$; $c(2)=4, d(2)= 0$, $c(3)=2, d(3)=2$,  $c(4) = 1, d(4)=1$. 
\end{example}

In what follows, we establish the equivalence between double-source \telebr in \kpath graphs and   \twocovertwo (\Cref{def:twocovertwo}). 


\begin{lemma}\label{lemma:kpath-le-twocovertwo}
Let $G$ be a \kpath graph formed by $k$ paths with common endpoints $s$ and $t$, where the $i$th path contains $\ell_i \geq 0$ internal vertices (in addition to $s$ and $t$).  
Consider a double-source \telebr instance on $G$, where $s$ receives the message at time~0 and $t$ at time~$\alpha$.  
Now, define an instance $I$ of \twocovertwo as
$I = (S=\{\ell_1,\ldots,\ell_k\},\; \beta = \alpha - 1)$.
Then computing the minimum broadcast time for this double-source \telebr instance is equivalent to solving the \twocovertwo instance $I$.  
\end{lemma}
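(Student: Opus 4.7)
My plan is to establish a bijective correspondence between broadcast schedules of length $m$ in the double-source instance and feasible solutions to the \twocovertwo instance $I$ with covering-set size $m$, and then argue that the minimum $m$ coincides on both sides. The key observation is that in any non-lazy scheme, for each path $P_i$ only two vertices can be informed directly from the two endpoints: $s$ may call its neighbor on $P_i$ in some round $x_i \in [m]$, and $t$ may call its neighbor on $P_i$ in some round $y_i \in [\alpha, m]$ (if it calls at all). The remaining internal vertices of $P_i$ are then informed along the path by simple forward propagation, so the number of vertices reached from the $s$-side by round $m$ is $c(i) = m - x_i + 1 \in [m]$ (or $0$ if $s$ never calls on $P_i$), and from the $t$-side it is $d(i) = m - y_i + 1 \in [m - \alpha + 1] = [m - \beta]$ (or $0$).

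For the forward direction (from broadcast to covering), I would start with an optimal non-lazy schedule of length $\br$, set $m = \br$, and define $c, d$ as above. Coverage of all $\ell_i$ internal vertices forces $c(i) + d(i) \geq \ell_i$, and the fact that each of $s$ and $t$ can place at most one call per round translates directly into the injectivity requirement: the nonzero values of $c$ (respectively $d$) correspond to distinct rounds used by $s$ (respectively $t$). This produces a feasible \twocovertwo solution of value $m \leq \br$.

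For the reverse direction, given any feasible $(c,d)$ of value $m$, I would build a schedule by instructing $s$ to call its neighbor in $P_i$ at round $m - c(i) + 1$ whenever $c(i) > 0$, and $t$ to call its neighbor in $P_i$ at round $m - d(i) + 1$ whenever $d(i) > 0$. The constraint $d(i) \leq m - \beta$ ensures $t$'s call happens no earlier than round $\alpha$, which is legal since $t$ is informed at time $\alpha$; the injectivity conditions guarantee no two calls are scheduled simultaneously at the same endpoint; and $c(i) + d(i) \geq \ell_i$ ensures the two propagating waves meet and cover the entire path by round $m$. Hence $\br \leq m$, giving equality with the optimum of $I$.

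The main obstacle I anticipate is handling boundary cases cleanly so the bijection is tight: paths with $\ell_i = 0$ (direct $s$–$t$ edges, which require no covering), paths covered entirely from one side (where $c(i) = 0$ or $d(i) = 0$ must be allowed, matching the codomain $\{0\}\cup[m]$ in Definition~\ref{def:twocovertwo}), and the late-source constraint $y_i \geq \alpha$ which translates to $d(i) \leq m - \beta$ via $\beta = \alpha - 1$. A non-laziness assumption, justified by a standard exchange argument showing idle rounds can always be eliminated without increasing broadcast time, will make the forward direction go through without loss of generality.
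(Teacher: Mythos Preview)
Your proposal is correct and follows essentially the same approach as the paper: both directions use the change of variables $c(i)=m-x_i+1$ and $d(i)=m-y_i+1$ to pass between rounds and covering values, with injectivity coming from distinct call rounds and coverage from $c(i)+d(i)\ge \ell_i$. If anything, you are more careful than the paper about the boundary cases ($\ell_i=0$, one-sided coverage with $c(i)=0$ or $d(i)=0$, and the non-laziness justification), which the paper leaves implicit.
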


\begin{proof}
Consider the covering of the instance $I$ with parameter $m$. 
We claim that there exists a valid broadcasting scheme for $G$ within $\br = m$ rounds. 
For each $i \in [k]$, let $c(i)$ and $d(i)$ denote the assigned values in the covering.  
Then $s$ informs the $i$th path at round $m - c(i) + 1$ and $t$ at round $m - d(i) + 1$.  
Thus, $s$ (respectively $t$) uses disjoint rounds drawn from 
$\{\, m - c + 1 \mid c \in C \,\} = [m]$  (respectively from 
$\{\, m - d + 1 \mid d \in [m - \alpha + 1] \,\} = [\alpha, m]$).  
Hence, both $s$ and $t$ follow a valid broadcast scheme.  

It remains to argue coverage.  
Since $c(i) + d(i) \geq \ell_i$, the remaining time after the calls at rounds $m - c(i) + 1$ and $m - d(i) + 1$ is sufficient to cover all $\ell_i$ internal vertices of path $i$.  
Thus, the covering implies a feasible broadcasting scheme.

Conversely, suppose we have a broadcasting scheme using $\br$ rounds.  
For each $\ell_i \in S$, if $s$ and $t$ call vertices in path $i$ at rounds $y^i_s$ and $y^i_t$, define 
$c(i) = \br - y^i_s + 1$ and $d(i) = \br - y^i_t + 1$.  
Since calls at different paths use distinct rounds, the values $c(i)$ (resp.\ $d(i)$) are all distinct.  
Moreover, the chosen values lie in  
$\{\br - x + 1 \mid x \in [\br]\} = [\br]$  
and  
$\{\br - x + 1 \mid x \in [\alpha,\br]\} = [\br - \beta]$, respectively.  
Finally, feasibility of the broadcasting scheme ensures  
$c(i) + d(i) = (\br - y^i_s + 1) + (\br - y^i_t + 1) \geq \ell_i$,  
so the functions $c$ and $d$ indeed form a valid covering of $S$.  
This establishes the equivalence.
\end{proof}

\paragraph*{EPTAS for \twocovertwo} \label{section:twocovertwo}
In the light of Lemmas~\ref{lem:general_case_reduction} and \ref{lemma:kpath-le-twocovertwo}, 
presenting an EPTAS for \telebr in \kpath graphs reduces to presenting an EPTAS for the \twocovertwo problem. 
This is the focus of the present section.  
Specifically, we prove \Cref{thm:twocovertwo}, which establishes the existence of a $(1+\epsilon)$-approximation algorithm 
for \twocovertwo for any $\epsilon > 0$. 
Our approach closely parallels the rounding and covering techniques used in Section~\ref{subsec:twocover} for \kcycle graphs.  In particular, it also addresses the covering problem through a two-stage approximation scheme that leverages the concept of rounded sets (see \Cref{def:roundset}) to reduce the problem complexity. 
The main idea is to transform the original ground multiset and the two covering sets into simplified versions with constant support (i.e., bounded numbers of distinct elements), so that an ILP formulation becomes computationally feasible while still preserving provable approximation guarantees.

Our EPTAS for \twocovertwo proceeds by reducing the original instance to one with bounded support.  
Given an input multiset $S$, we first form its \emph{rounded version} $\roundset{p}{S}$, where each element is rounded up to a larger element of $S$ so that the number of distinct values becomes $p$.  
Here $p$ depends only on $\epsilon$.  
Although rounding inflates item sizes, we show that the optimal covering cost for $\roundset{p}{S}$ is at most a $(1+1/p)$ factor larger than that of $S$ (\Cref{proposition:optroundset:2}). 
Thus, solving the rounded instance suffices for approximation.  

To search for the optimal covering, we use binary search over candidate values $m$, where $m$ determines the covering sets $C=[m]$ and $D=[m-\beta]$.  
For each candidate $m$, we construct rounded sets $\roundset{p}{C}$ and $\roundset{p}{D}$, each with support size $p$.  
We prove that if $\roundset{p}{S}$ can be covered using $\roundset{p}{C}$ and $\roundset{p}{D}$, then the original instance can be covered using $C'=[m']$ and $D'=[m'-\beta]$ with $m' \leq (1+1/p)m$ (\Cref{claim:roundsetcover:2}).  
Conversely, if we cannot cover $\roundset{p}{S}$ with the rounded sets, then it is not possible to cover it using $[m]$ and $[m-\beta]$.  
It therefore suffices to test whether $\roundset{p}{S}$ is coverable using $\roundset{p}{C}$ and $\roundset{p}{D}$.  
Since all three sets have constant support (bounded by $p$), this check can be carried out by an ILP formulation in polynomial time (\Cref{lemma:exhaustive2}).  
The remainder of this section provides details of the algorithm.

\newcommand{\kpathpropoptroundset}[1]{
\begin{proposition}\label{proposition:optroundset:2}
    \emph{#1} $\opt(\roundset{p}{S}) \leq \opt(S) + \lceil |S|/p\rceil$, where $\opt$ is the optimal cover for a specific instance of \twocovertwo. 
\end{proposition}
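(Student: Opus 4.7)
The plan is to mirror the strategy of \Cref{proposition:optroundset}, exploiting the fact that in the \twocovertwo setting a single unit increase of the parameter $m$ simultaneously enlarges both covering sets $C=[m]$ and $D=[m-\beta]$, so we need only half as many extra ``levels'' as in the one-set case.

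First, I would fix an optimal solution of the \twocovertwo instance on $S$ with parameter $m=\opt(S)$, realized by the two functions $c$ and $d$ with values in $\{0\}\cup[m]$ and $\{0\}\cup[m-\beta]$ respectively, satisfying $c(i)+d(i)\ge \ell_i$ for every $i\in[n]$. Sorting $S$ in non-increasing order and applying \Cref{def:roundset} yields the $p$ consecutive parts $P_1,\dots,P_p$, and the same partition applies to $\roundset{p}{S}$ after rounding. As in \Cref{proposition:optroundset}, I would then reassign the $(c,d)$ pair originally allocated to each item of $P_i$ to cover an item of $P_{i+1}$ in $\roundset{p}{S}$, for every $i\in[p-1]$. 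This reassignment is valid because $|P_{i+1}|\le|P_i|$ and every rounded element of $P_{i+1}$ has value $\max P_{i+1}\le \min P_i$, so each recycled pair still satisfies the covering inequality; the injectivity constraints on $c$ and $d$ are preserved since the reassignment is a pure relabeling.

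After the shift, only the $\lceil |S|/p\rceil$ items of the first part $P_1$ of $\roundset{p}{S}$ remain uncovered, each with value $\ell_{\max}:=\max S$. The inequality $\ell_{\max}\le m+(m-\beta)=2m-\beta$ holds directly from feasibility of the original optimum for $S$. I would then raise $m$ to $m':=m+\lceil |S|/p\rceil$, which simultaneously enlarges $C$ to $C'=[m']$ and $D$ to $D'=[m'-\beta]$, contributing exactly $\lceil |S|/p\rceil$ fresh values to each set. Pairing $c=m+j\in C'$ with $d=m-\beta+j\in D'$ for $j=1,\dots,\lceil |S|/p\rceil$ yields $\lceil |S|/p\rceil$ disjoint pairs of sum $2m-\beta+2j\ge 2m-\beta\ge \ell_{\max}$, and assigning pair $j$ to the $j$-th item of $P_1$ in $\roundset{p}{S}$ completes a valid cover with parameter $m'$, establishing $\opt(\roundset{p}{S})\le \opt(S)+\lceil |S|/p\rceil$.

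The main points of care are handling the possibly smaller last part $P_p$ (which causes no trouble, since reassigned pairs left without a target are simply discarded) and verifying that the new values $\{m+1,\dots,m'\}$ and $\{m-\beta+1,\dots,m'-\beta\}$ are disjoint from $[m]$ and $[m-\beta]$ respectively, so that injectivity is preserved; both follow immediately from the construction. The only genuinely new ingredient relative to \Cref{proposition:optroundset} is the pairing of one fresh $C$-element with one fresh $D$-element per added level of $m$, which is what halves the additive error from $2\lceil |S|/p\rceil$ to $\lceil |S|/p\rceil$.
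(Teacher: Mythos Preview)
Your proposal is correct and follows essentially the same approach as the paper: shift the optimal $(c,d)$-pairs from part $P_i$ to part $P_{i+1}$, then cover the remaining first part of size $\lceil |S|/p\rceil$ using the fresh values created by incrementing $m$, exploiting the bound $\ell_{\max}\le 2m-\beta$. Your write-up is in fact more explicit than the paper's---you spell out the exact pairing $(m+j,\,m-\beta+j)$ and verify injectivity---whereas the paper simply asserts that ``adding an extra $\lceil |S|/p \rceil$ items to $m$'' suffices.
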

}
\kpathpropoptroundset{}
\begin{proof}
    By definition, $\roundset{p}{S}$ consists of $p$ parts.
    Given an optimal cover of $S$, form a cover of $\roundset{p}{S}$ by assigning the numbers given to the part $i$ in $S$ to the part $(i+1)$ in $\roundset{p}{S}$ for $i\in[p-1]$. As the size and value of part $i+1$ is not greater than the value of part $i$, the set of numbers that can cover part $i$ can also cover part $i+1$. 

    Therefore, the first part is the only part that needs to be covered. We show that adding an extra $\lceil |S|/p \rceil$ to the covering set is enough to cover the first part. This is because every $\ell_i\in S$ is less than 
    $2\opt(S)-\beta$, 
    as otherwise it would not be possible to cover $v_i$ with a number in $C$ and a number in $D$ for the optimal value of $m$. Thus, by adding an extra $ \lceil |S|/p \rceil$ items to the $m$, we can cover $\roundset{p}{S}$.
\end{proof}

Since $\opt(S) \geq |S|$, it follows that any covering of $\roundset{p}{S}$ yields a $(1+1/p)$-approximation for covering $S$.  
This allows us to restrict attention to $\roundset{p}{S}$, which has constant support $p$.  
To cover $\roundset{p}{S}$, we apply a binary search strategy.  
For a candidate value $m$, we design a procedure that either produces a covering using $[m + \lceil m/p \rceil]$, or certifies that no covering exists with $[m]$. 
This way, we identify the smallest value of $m$ for which $\roundset{p}{S}$ can be covered using the rounded covering sets $\roundset{p}{C}$ and $\roundset{p}{D}$, where $C=[m]$ and $D=[m-\beta]$.


 

\newcommand{\kpathclaimroundsetcover}[1]{
\begin{claim}\label{claim:roundsetcover:2}
    {#1} Given that we can cover $S$ with $\roundset{p}{C}$ and $\roundset{p}{D}$, we can cover $S$ using $C'=[m']$ and $D'=[m'-\beta]$, where $m'=m+\lceil m/p \rceil$.
\end{claim}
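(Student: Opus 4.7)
The plan is to essentially rerun the argument of Claim~\ref{claim:roundsetcover}, but now independently on $C=[m]$ (embedded into $C'=[m']$) and on $D=[m-\beta]$ (embedded into $D'=[m'-\beta]$). For each of the two covering sets we will define an injective, value--non--decreasing map from the rounded set into the enlarged set, and then use these two maps to rewrite the given covering of $\roundset{p}{S}$ as a covering that uses only elements of $C'$ and $D'$.

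First I would list the elements of $C$ in decreasing order and partition them into $p$ blocks of width $\lceil m/p\rceil$, so that by Definition~\ref{def:roundset} the multiset $\roundset{p}{C}$ replaces every element in the $i$-th block by that block's maximum $m-(i-1)\lceil m/p\rceil$. Then I would define $\phi_C$ by sending the $\lceil m/p\rceil$ copies of this rounded value into the $i$-th top block of $C'$, namely the interval $[\,m'-i\lceil m/p\rceil+1,\;m'-(i-1)\lceil m/p\rceil\,]$, using distinct targets. Since the smallest element of this interval equals $m-(i-1)\lceil m/p\rceil+1$, each image strictly dominates its preimage, and by construction $\phi_C$ is injective on all of $\roundset{p}{C}$.

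The same recipe yields $\phi_D$ on $D$ and $D'$: shift uniformly by $\lceil m/p\rceil$, so that the top of $D$ lands at $m-\beta+\lceil m/p\rceil = m'-\beta$ and everything stays inside $[\,m'-\beta\,]$. Here the block width in $D$ is $\lceil (m-\beta)/p\rceil$ rather than $\lceil m/p\rceil$, so the block partitions in $D$ and $D'$ do not match perfectly; this is the only subtlety. I would resolve it by checking the one-line inequality $\lceil m/p\rceil \geq \lceil (m-\beta)/p\rceil$, which is immediate from $m\geq m-\beta\geq 0$, and deducing that the smallest shifted value in the $i$-th block of $D$ is still at least the rounded value $m-\beta-(i-1)\lceil (m-\beta)/p\rceil$ assigned to that block in $\roundset{p}{D}$.

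Finally, I would replace the assumed covering pair $(c(i),d(i))\in \roundset{p}{C}\times\roundset{p}{D}$ by $(\phi_C(c(i)),\phi_D(d(i)))\in C'\times D'$. Both coordinates only increase, so $\phi_C(c(i))+\phi_D(d(i))\geq c(i)+d(i)\geq \ell_i$, and injectivity of $\phi_C$ and $\phi_D$ combined with the injectivity of the original covering guarantees that distinct indices $i\neq j$ still receive distinct nonzero $c$- and $d$-values. This gives a valid cover of $S$ by $C'=[m']$ and $D'=[m'-\beta]$ in the sense of Definition~\ref{def:twocovertwo}, completing the proof.
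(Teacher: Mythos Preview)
Your proof is correct and follows essentially the same approach as the paper: both arguments shift the elements of $C$ and $D$ uniformly upward by $\lceil m/p\rceil$ into $C'$ and $D'$, then check that each shifted value still dominates the corresponding rounded value in $\roundset{p}{C}$ and $\roundset{p}{D}$. You are in fact slightly more careful than the paper in explicitly noting and handling the mismatch of block widths for $D$ via the inequality $\lceil m/p\rceil \geq \lceil (m-\beta)/p\rceil$, which the paper's proof glosses over.
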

}
\kpathclaimroundsetcover{}
\begin{proof}
    Based on definition, $\roundset{p}{C}$ contains $p$ parts. 
    Suppose that we have a covering with $\roundset{p}{C}$ and $\roundset{p}{D}$ for $S$.
    Associate value $m+\lceil m/p \rceil - (i-1)$ to the $i$'th largest number in $C$ for $i\in[m]$. 
    This transformation only increases the value as the first block with original value $m$ matches to $m+1, \ldots, m+\lceil m/p \rceil$, the second block matches to $m-\lceil m/p \rceil +1 ,\ldots, m$, and so on. More specifically, the $i$'th block matches to $m-(i-1)\lceil m/p \rceil +1 ,\ldots, m-(i-2)\lceil m/p \rceil$, which is greater than the original value of the block. 
    Similarly, associate value $m+\lceil m/p \rceil -\beta - (i-1)$ to the $i$'th largest number in $D$ for $i\in[m-\beta]$. 
    As the new sets are subsets of $C'=[m']$ and $D'=[m'-\beta]$, and every value in $\roundset{p}{C}$ and $\roundset{p}{D}$ is now larger than its original value, the same covering can cover $S$ with the corresponding values in $C'$ and $D'$. 
\end{proof}

\newcommand{\kpathlemmaexhaustive}[1]{

\begin{lemma}\label{lemma:exhaustive2} \emph{#1} 
Given three sets $\roundset{p}{S}$, $\roundset{p}{C}$, and $\roundset{p}{D}$, there exists an algorithm that, in time $p^{\oh(p^3)}\cdot \poly(|S|+|C|+|D|)$, either outputs a covering of $\roundset{p}{S}$ using the covering sets $\roundset{p}{C}$ and $\roundset{p}{D}$ or correctly reports that no such covering is possible.
\end{lemma}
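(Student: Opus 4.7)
The plan is to follow the approach used in Lemma~\ref{lemma:exhaustive} for \twocover, adapted to the double-covering setting in which each element of $\roundset{p}{S}$ may be matched with one element from $\roundset{p}{C}$, one from $\roundset{p}{D}$, or one from each (consistent with the fact that $c(i)$ and $d(i)$ may independently take the value $0$ in Definition~\ref{def:twocovertwo}). The key observation is again that all three sets have support at most $p$, so the combinatorial structure of a covering is essentially oblivious to the size of the instance.

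First, I will describe a compact encoding of candidate coverings. A \emph{covering type} is a pair $(a,b) \in (\roundset{p}{C} \cup \{0\}) \times (\roundset{p}{D} \cup \{0\})$, giving at most $(p+1)^2 = \oh(p^2)$ options. A candidate covering is then fully specified by recording, for each distinct value $v$ appearing in $\roundset{p}{S}$ (of which there are at most $p$) and each covering type $(a,b)$, the number $N_{v,(a,b)}$ of elements of value $v$ covered by that type. These $\oh(p^3)$ counts each lie in $[0,|S|]$, so in the notation of Lemma~\ref{lemma:exhaustive} the number of candidate encodings is bounded by $\oh(2^{p^3}|S|)$.

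For each candidate encoding, I will check feasibility in $\oh(|S|+|C|+|D|)$ time by enforcing three families of constraints: (i) for every used type $(a,b)$ and covered value $v$, the \emph{sum constraint} $a+b \geq v$; (ii) for every value $v$, $\sum_{(a,b)} N_{v,(a,b)}$ equals the multiplicity of $v$ in $\roundset{p}{S}$; and (iii) the \emph{capacity} (injectivity) constraint that for each $a \in \roundset{p}{C}$ the total demand $\sum_{v,b} N_{v,(a,b)}$ does not exceed the multiplicity of $a$ in $\roundset{p}{C}$, and symmetrically for each $b \in \roundset{p}{D}$. Constraint (i) is checked once per type per value in $\oh(p^3)$ time; (ii) and (iii) reduce to aggregating $\oh(p^3)$ counts and comparing them against histograms of $\roundset{p}{S}$, $\roundset{p}{C}$, and $\roundset{p}{D}$, which can be precomputed once in $\oh(|S|+|C|+|D|)$ time and then queried in constant time.

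If some candidate encoding satisfies all three constraints, I will translate it into an explicit valid assignment $c:S \to \{0\}\cup[|C|]$ and $d:S\to\{0\}\cup[|D|]$ by distributing, for each type $(a,b)$ and value $v$, concrete copies of $a$ and $b$ among the $N_{v,(a,b)}$ elements of value $v$ in arbitrary order; constraint (iii) guarantees that enough copies are available and that the assignment is injective in the sense required by Definition~\ref{def:twocovertwo}. If no encoding is feasible, I report that no covering exists, which is correct since the encoding records all information relevant to feasibility. The total running time is $\oh(2^{p^3}|S|)$ encodings times $\oh(|S|+|C|+|D|)$ per check, matching the claimed bound. The main subtlety I expect is constraint (iii): the rounded sets $\roundset{p}{C}$ and $\roundset{p}{D}$ collapse many distinct numbers to the same rounded value, and the injectivity requirement in Definition~\ref{def:twocovertwo} applies to the original elements, so one must track multiplicities rather than mere presence of each rounded value. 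Handling this accounting carefully is precisely what the capacity constraints above ensure, and once that bookkeeping is in place the rest of the argument is a direct adaptation of Lemma~\ref{lemma:exhaustive}.
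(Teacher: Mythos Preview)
Your proposal is correct and follows essentially the same approach as the paper: encode a candidate covering by the $\oh(p^3)$ counts $N_{v,(a,b)}$, enumerate all $\oh(2^{p^3}|S|)$ encodings, and verify each in linear time. If anything, your write-up is more careful than the paper's own proof: you explicitly include the ``$0$'' option in each covering type (which the paper glosses over) and you spell out the three feasibility constraints, in particular the capacity/injectivity constraint~(iii), which the paper simply subsumes under ``each solution takes linear time to check.''
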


}
\kpathlemmaexhaustive{}

\begin{proof}
Let $V$, $T_C$, and $T_D$ be the set of distinct elements in $\roundset{p}{S}$, $\roundset{p}{C}$, and $\roundset{p}{D}$, respectively.
Then $|V|\le p$ and $|T_C|, |T_D|\le p$.
For each tuple $(i,j)\in (T_C\cup \{\emptyset\}) \times (T_D \cup \{\emptyset\})$ (except $(\emptyset, \emptyset)$) and each ground type $v\in V$,
introduce an integer variable $x_{i,j,v}\in \mathbb{Z}_{\ge 0}$,
intended to denote how many times the pair $(i,j)$ is used to cover elements of type $v$. 
There are $((|T_C|+1)(|T_D|+1)-1) |V|\le p^3$ variables.

We write a feasibility ILP whose constraints enforce that
(i) every ground element is covered exactly once, and
(ii) cover-items are not used more than their multiplicities,
(iii) every ground element is covered with the right set of values.

\smallskip
\noindent
\emph{Cover all ground elements.}
For each $v\in V$, we let $\mu_S(v)$ be the multiplicity of $v$, that is, $\mu_S(v) := |\{ s\in \roundset{p}{S} : s = v \}|$. To ensure $v$ is covered exactly once, we add a constraint:

\begin{equation}\label{eq:cover-demand}
\sum_{(i,j)\in (T_C\cup \{\emptyset\}) \times (T_D \cup \{\emptyset\})} x_{i,j,v}\;=\; \mu_S(v).
\end{equation}

\smallskip
\noindent
\emph{Do not overuse cover-items.}
For each $t\in T_C$, let $\mu_C(t)$ be the multiplicity of type $t$ in $\roundset{p}{C}$, that is, $\mu_C(t) := |\{ s\in \roundset{p}{C} : s = t \}|$. Similarly, define $\mu_D(t)$ for each $t\in T_D$.
Each time we use a pair $(i,j)$ we consume one item of type $i$ and one item of type $j$
(with the understanding that if $i=j$ then we consume two copies).
Thus, for each $t\in T_C$ we impose
\begin{equation}\label{eq:cover-supply}
\sum_{v\in V}\ \sum_{j\in T_D\cup \{\emptyset\}}
\, x_{t,j,v}
\;\le\; \mu_C(t).
\end{equation}
Similarly, we impose the same constraint for $\mu_D(t)$ for each $t \in T_D$:
\begin{equation}\label{eq:cover-supply}
\sum_{v\in V}\ \sum_{i\in T_C\cup \{\emptyset\}}
\, x_{i,t,v}
\;\le\; \mu_D(t).
\end{equation}

 \smallskip
 \noindent
 \emph{Type-feasibility.}
 Finally, we forbid pairs that cannot cover type $v$ by setting $x_{i,j,v}=0$ whenever $(i,j)$ is not a valid set for covering $v$, that is, when the sum of $i$ and $j$ (if it exists) is less than $v$.  

\smallskip
\noindent
This ILP has at most $p^3$ integer variables, and the number of constraints is
$\oh(p^3)$.

By Lenstra's theorem~\cite{lenstra1983integer} (and subsequent improvements by Kannan~\cite{kannan1987minkowski}),
integer linear feasibility with a fixed number $d$ of variables can be solved in time $d^{\oh(d)}\cdot \poly(L)$, where $L$ is the encoding length of the ILP.
Here $d\le p^3$ and $L=\poly(|S|+|C|+|D|+p)$, hence the feasibility test runs in
$p^{\oh(p^3)}\cdot \poly(|S|+|C|+|D|)$ time.
If feasible, the resulting integer solution specifies a valid covering; otherwise, none exists.
\end{proof}

To summarize, for a candidate covering size $m$ and ground set $\roundset{p}{S}$, we construct $\roundset{p}{C}$ and $\roundset{p}{D}$ and test, via a feasibility ILP (\Cref{lemma:exhaustive2}), whether $\roundset{p}{S}$ can be covered.  
If no covering is found, we conclude that $\roundset{p}{S}$ cannot be covered with $[m]$ (bad guess).  
If a covering is found, we extend it to a covering with $m' = m + \lceil m/p \rceil$, thereby obtaining a $(1+1/p)$-approximation for $\roundset{p}{S}$.  
Combining this with \Cref{proposition:optroundset:2}, we achieve an overall approximation factor of $(1+1/p)^2$, which approaches~1 as $p$ grows.  


\newcommand{\kpathfinalthm}[1]{
\begin{theorem}\label{thm:twocovertwo}
    \emph{#1} There is a $(1+\epsilon)$-approximation for the \twocovertwo problem that terminates in $\oh(p^{\oh(p^3)}poly(|S|,\opt))$, where $p=3/\epsilon^2$. 
\end{theorem}
}
\kpathfinalthm{}
\begin{proof}
    Given an instance $S$ of \twocovertwo with $\epsilon > 0$, set $p = \lceil 3/\epsilon^2 \rceil$ to ensure $(1+1/p)^2 \leq 1+\epsilon$. Construct $\roundset{p}{S}$ by rounding up elements in $S$. 
    We use binary search over $m$ to find $\opt_p(\roundset{p}{S})$, where $\opt_p(S)$ is the minimum $m$ such that $S$ can be covered using $\roundset{p}{[m]}$ and $\roundset{p}{[m-\beta]}$. For each candidate set $[m]$:
    \begin{enumerate}
        \item Construct $\roundset{p}{C}$ and $\roundset{p}{D}$ where $C=[m], D=[m-\beta]$ as per \Cref{def:roundset}.
        \item Use the feasibility ILP (\Cref{lemma:exhaustive2}) to determine if $\roundset{p}{C}$ and $\roundset{p}{D}$ can cover $\roundset{p}{S}$ in time $\oh(p^{O(p^3)}poly(|S|,\bropt)$.
    \end{enumerate}

    Therefore, the binary search routine gives the smallest $m$ such that $\roundset{p}{S}$ can be covered using $\roundset{p}{[m]}$ and $\roundset{p}{[m-\beta]}$.
%
 %
    We construct a covering of $\roundset{p}{S}$ using $C'=[m']$ and $D'=[m'-\beta]$, where $m' \leq (1+1/p)m = (1+1/p)\opt_p(\roundset{p}{S}) \leq (1+1/p)\opt(\roundset{p}{S})$ by \Cref{claim:roundsetcover:2}.
    Given that $C'=[m']$ and $D'=[m'-\beta]$ cover $\roundset{p}{S}$,  
    they cover $S$ as well. Using \Cref{proposition:optroundset:2}, we have $m' \leq (1+1/p)\opt(\roundset{p}{S})\leq (1+1/p)^2\opt(S)$.   
    The running time is $\oh(\log \opt)$ binary search iterations, each taking $\oh(p^{\oh(p^3)}poly(|S|,\bropt))$ time, giving total complexity $p^{\oh(p^3)}poly(|S|,\bropt))$ where $p^3 = \oh(1/\epsilon^6)$.
\end{proof}
\paragraph*{EPTAS for \telebr in Melon graphs}

Given the equivalence between \telebr in \kpath graphs and the \twocovertwo problem, as established in \Cref{lemma:kpath-le-twocovertwo} along with the EPTAS presented in Theorem~\ref{thm:twocovertwo} for the \twocovertwo problem, we can conclude the following.


\begin{corollary}\label{Coro:PTASKPath}
There exists an EPTAS for \telebr in \kpath graphs.  
In particular, for a \kpath graph on $n$ vertices and any $\epsilon > 0$, the algorithm computes a $(1+\epsilon)$-approximate broadcast schedule in time $O(2^{1/\epsilon^6}\, k(k+n)\log n)$.
\end{corollary}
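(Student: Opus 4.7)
The plan is to derive the corollary as an immediate consequence of the structural equivalence established in Lemma~\ref{lemma:kpath-le-twocovertwo} combined with the PTAS for \twocovertwo guaranteed by Theorem~\ref{thm:twocovertwo}, together with the source-reduction step of Lemma~\ref{lem:general_case_reduction}. First I would invoke Lemma~\ref{lem:general_case_reduction} to reduce the general single-source instance $(G,v_o)$ on a \kpath graph to a double-source instance, where $s$ is the primary source at time $0$ and $t$ is a late source receiving the message at some time $\alpha \leq d(s,t)$; this reduction preserves the approximation factor up to a constant additive overhead that does not affect the multiplicative $(1+\epsilon)$ guarantee.

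Next, I would apply Lemma~\ref{lemma:kpath-le-twocovertwo} to translate the double-source broadcasting instance into the \twocovertwo instance $I=(S=\{\ell_1,\ldots,\ell_k\},\;\beta=\alpha-1)$, where $\ell_i$ counts the internal vertices on the $i$th path. The lemma establishes a bijective correspondence between broadcast schedules of a given length and feasible coverings of $S$, so optimal broadcast time equals the optimal covering value, and any $(1+\epsilon)$-approximate covering yields a $(1+\epsilon)$-approximate broadcast schedule.

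Then I would invoke Theorem~\ref{thm:twocovertwo}, which provides a $(1+\epsilon)$-approximation algorithm for \twocovertwo running in time $O(2^{p^3}\,|S|\,(|S|+\opt)\log\opt)$ with $p=\lceil 3/\epsilon^2\rceil$. Since $|S|=k$ and $\opt\leq n$ (because the optimal broadcast time on a graph with $n$ vertices is at most $n$), substituting these bounds yields the running time $O\bigl(2^{1/\epsilon^6}\,k(k+n)\log n\bigr)$ claimed in the corollary. Reconstructing the broadcast schedule from the covering $(c,d)$ is done explicitly as in the proof of Lemma~\ref{lemma:kpath-le-twocovertwo}: $s$ calls its neighbor on path $i$ at round $\br-c(i)+1$ and $t$ at round $\br-d(i)+1$, and the message then propagates along each path.

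There is no significant obstacle here, since all the technical work is already encapsulated in the preceding lemmas and theorem; the only point requiring care is the bookkeeping in Lemma~\ref{lem:general_case_reduction}, which adds at most a constant number of rounds to account for the delayed informing of $t$, and therefore does not degrade the multiplicative $(1+\epsilon)$ bound once one observes that $\bropt$ grows with $n$.
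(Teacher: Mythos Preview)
Your proposal is correct and follows essentially the same approach as the paper, which derives the corollary in one line by combining Lemma~\ref{lemma:kpath-le-twocovertwo} with Theorem~\ref{thm:twocovertwo}. If anything, your argument is more complete, since you also explicitly invoke Lemma~\ref{lem:general_case_reduction} to handle an arbitrary originator, a step the paper leaves implicit.
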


\subsubsection*{Multi-source Broadcasting in \Kpath Graphs}\label{apx:multisource-melon}

The EPTAS of  Corollary~\ref{Coro:PTASKPath} extends naturally to the multi-source broadcasting setting. Suppose we are given an instance $I=(G, S)$ of multi-source broadcasting in $G$, where $S$ is the set of all sources, along with a fixed parameter $\epsilon>0$. Let $\br$ be a broadcast-time guess value. We describe a procedure that, 
when $\br=\bropt(G, S)$, outputs a valid broadcasting scheme that completes within $(1+\epsilon)\br+1$. Otherwise, it returns either \textsc{Bad-Guess} or a scheme with no guarantee on its broadcast time. 
As we do not know the optimal broadcast time, we iterate over all $\br \in [\lceil \log n \rceil, n] $ and take the scheme with the best broadcasting time among all returned schemes. This way, we can achieve broadcast time $(1+\epsilon)\bropt(I) + 1$.

\smallskip
\noindent
\emph{Preprocessing.}
For each non-endpoint source $s_i$ in the \kpath graph, we remove all vertices that lie within distance $\br$ of $s_i$
on their respective paths, excluding the $s$ and $t$.
(The vertex $s_i$ is also removed.)
If the endpoints are not initially source vertices, let $s_{ms}$ and $s_{mt}$ be the sources closest to $s$ and $t$, respectively,
at distance $d_{\min}(s)$ and $d_{\min}(t)$.  \\
Let $G'$ denote the remaining graph and $d_{\min} = \min(d_{\min}(s), d_{\min}(t))$.
If $G'$ is disconnected, or if $\max(d_{\min}(s),d_{\min}(t))>\br$, then there exists a vertex whose distance from every source
exceeds $\br$, and we correctly return \textsc{Bad-Guess}. \\
%
The preprocessing removes a window of at most $\br$ vertices around each non-center source.
Note that all removed vertices can be informed within $\br+1$ rounds by broadcasting from their respective sources. 

\smallskip
\noindent
\emph{Reduction to covering.}
Consider an instance $I'=(G',\{s,t\}, \alpha)$ of \telebr, where $s$ receives the message at time $0$ and $t$ at time $\alpha$.
We associate $I'$ with an instance $\mathcal{S}$ of the \twocovertwo problem,
defined analogously to Lemma~\ref{lemma:kpath-le-twocovertwo} for \kpath graphs.
The only difference is that some elements of the ground multiset $\mathcal{S}$,
which we refer to as \emph{partial path elements},
correspond to partially-covered paths in $G'$ rather than full paths.
Each such path, depending on the endpoint that it is connected to, must be covered by exactly one of the endpoints (it cannot be covered by both covering sets). \\
This modified covering instance can be solved in the same way as discussed in Section~\ref{section:twocovertwo}. In particular, the rounding steps remain unchanged. The only additional constraint arises in the final step
(Lemma~\ref{lemma:exhaustive2}),
where we ensure that each partial path element in the rounded ground set is covered by its corresponding rounded cover set.
This produces a broadcast schedule for $I'$ that completes within
$(1+\epsilon)\,\bropt(I')$ rounds.

\smallskip
\noindent
\emph{Combining the solutions.}
Overall, broadcasting in the original multi-source instance $G$ completes within
\[
\max\!\left\{\, \br+1,\ d_{\min} + 1 +(1+\epsilon)\,\bropt(I') \,\right\}.
\]
The term $\br+1$ accounts for informing all vertices removed during the preprocessing step. The second term captures the time required to inform the closest endpoint and possibly $1$ extra round if the same source covers both endpoints (one of the endpoints can be informed through the other endpoint), which is $d_{\min}+1$ rounds, followed by the execution of the single-source broadcast within $I'$.

Suppose $br = \bropt(I)$. Now, if $G'$ has no vertices other than $s$ and $t$, then we have found a broadcasting schedule with $\br+1 = \bropt(I)+1$ rounds. Otherwise, in the optimal scheme, vertices in $G'$ must be informed through the $s$ and $t$. 
This is because the vertices in $G'$ are the ones that were not removed when forming $G'$ and thus are at a distance further than $\br=\bropt(I)$ from the sources in their respective cycles.
Therefore, as the center cannot be informed earlier than round $d_{\min}$, we can write $\bropt(I)\ge d_{\min} +\bropt(I')$.
Hence, the broadcast time of our scheme is at most
\[
\max\!\left\{\, \br+1,\ d_{\min} + 1 +(1+\epsilon)\,\bropt(G',s) \,\right\} \leq (1+\epsilon)\bropt(G,S)+1.
\]


\section{Graphs of Bounded Cutwidth}

In this section, we give a polynomial-time algorithm for optimally solving the multi-source broadcasting problem on graphs of bounded cutwidth. Intuitively, the cutwidth constraint limits the degree of $G$ (in contrast to other families such as \kcycle or \kpath graphs), and a cutwidth ordering exposes a small ``active window'' of vertices at each step. This property enables a dynamic programming (DP) approach.
We assume throughout that a cutwidth-$k$ ordering $(v_1,\ldots,v_n)$ is given; all references to ``before'' and ``after'' are with respect to this ordering. Recall that such an ordering can be computed in fixed-parameter tractable time for constant~$k$~\cite{thilikos2005cutwidth}.



\medskip
\noindent\textbf{DP Table.}
We define a DP table with entries of the form $dp[i][s]$, where $i \in [0,n]$ and $s$ is a \emph{state}.  
The value $dp[i][s]$ denotes the minimum broadcast time for informing the first $i$ vertices in the cutwidth ordering while ending in state $s$.  
Define the set of \emph{boundary vertices} as $B_i = \{v_j| j\leq i, v_j \text{ has at least a neighbor after }v_i\}$. Given that cutwidth is $k$, it holds that $|B_i|\leq k$.
The state $s$ encodes all necessary information about the {boundary vertices}, namely, how these vertices are informed and how the vertices appearing after $v_i$ can connect to them to form a potential broadcast tree.

Intuitively, a state describes the partial structure of a broadcast forest spanning boundary vertices together with their timing information. 
We fill out the table row by row, starting with $i=0$ and ending with $i=n$.  
The last row corresponds to the entire graph, and the minimum value among its states indicates the optimal broadcast time.

\medskip
\noindent\textbf{Definition of states.}
Fix $i$, and recall $B_i$ denotes the set of boundary vertices. We define the \emph{active window} of $i$, denoted by $A_i$, as a set formed by the union of $B_i$ and all neighbors of vertices in $B_i$.
%
Given a cutwidth-$k$ ordering, the degree of every vertex is at most $2k$.
For an index $i$, the active window contains all vertices in $B_i$ (at most $k$ vertices) and their neighbors.
Therefore, its total size is bounded by $k + 2k^2$.

A state $s$ encodes a potential partial broadcast tree consistent with the active window by storing the following information:
\begin{itemize}
    \item \textbf{Parents.} For each $u\in B_i$, its parent $p(u)$ (unless it is a source vertex).  
    Each $p(u)$ is chosen from at most $2k$ active vertices or a special ``no parent'' symbol for the sources, yielding $(2k+1)^k$ possibilities.
    \item \textbf{Ranks.} For each $u\in B_i$, its rank among the children of $p(u)$, ordered by the rounds in which $p(u)$ calls them.  
    Since $p(u)$ has at most $2k$ children, there are at most $2k$ options per vertex, and $(2k)^k$ total options.
    \item \textbf{Outgoing calls.} For each $u\in B_i$, the active vertex that it calls in each round after being informed.  
    For every round slot, there are $2k+1$ options (one of up to $2k$ children or a null option if the slot is unused), giving at most $(2k+1)^{2k+1}$ possibilities per $u\in B_i$.
    \item \textbf{Inform times.} The time at which the root of each subtree is informed.  
    Since there are at most $k$ subtrees (each containing a boundary vertex), and each root may be informed at a round in $[n]$, this yields $n^k$ possibilities.
\end{itemize}

\begin{proposition}\label{prop:stateNumber}
If $S$ denotes the space of all states, then
\begin{align*}
|S| & \;\le\; f(k) \cdot n^k,\\ \text{\ where \ \ }  f(k) &= (2k+1)^k \cdot (2k)^k \cdot (2k+1)^{k(2k+1)}.
\end{align*}
In particular,  $|S| = \oh(n^k)$ for constant $k$.
\end{proposition}

For a state $s$, let $\ell(s)$ denote the latest time at which any active node in the broadcast subtrees of $s$ is informed.  
Note that the global broadcast time of a scheme realizing $s$ may exceed $\ell(s)$ if the last informed vertex lies outside the active window.

\medskip
\noindent\textbf{Filling Out the DP table.}
A state $s$ is \emph{valid} for step $i$ if its parent–child assignments among nodes in its broadcast forest respect the edges of $G$.  
For example, if $u_1$ and $u_2$ are not adjacent in $G$ but $u_1$ is the parent of $u_2$ in $s$, then $s$ is invalid at~$i$. Furthermore, the inform times of all source vertices must be $0$.
For any invalid state at step $i$, we set $dp[i][s]=+\infty$.  

\smallskip
\noindent\emph{Base case.}  
When $i=0$, the active window is an empty set. Therefore, there is only one \emph{null} valid state $s_\phi$, for which $dp[0][s_\phi] =0$.  

\smallskip
\noindent\emph{Transitions.}  
To compute $dp[i+1][s]$ for a valid state $s$, we check states like $s'$ that are valid at step $i$ and \emph{compatible} with $s$.  
Compatibility means that information concerning the overlap between active states at steps $i$ and $i+1$ matches in both states:  the inform times of shared vertices, their parent–child relations, and the directions and timings of edges incident to them must all agree.  

If $s^*$ is a compatible state with $s$ with minimum $dp[i][s^*]$, we set
$
dp[i+1][s] \;=\; \max\{ dp[i][s^*], \ell(s)\}.
$
We also record $\mathrm{previous}(s)=s^*$, which allows reconstruction of the optimal broadcast tree.  
If no compatible state exists, then $dp[i+1][s]=+\infty$.  
Intuitively, realizing state $s$ at $i+1$ requires extending from some compatible predecessor state at $i$, and the broadcast time is determined by whichever is larger: the time already incurred or the latest inform time among the active vertices.




\medskip

Using an inductive argument, 
one can establish the correctness of the above approach.  

\newcommand{\bandwidthMainTheorem}[1]{
\begin{theorem}\label{Th:BandWidthMain}\emph{#1}
\telebr can be solved optimally on graphs of cutwidth $k$ in time $f(k)^2\cdot n^{2k+1}$.  
In particular, for a constant $k$, there is a polynomial-time algorithm for computing an optimal broadcast schedule.
\end{theorem}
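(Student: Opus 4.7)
The plan is to formalize the dynamic programming approach already sketched in the excerpt and establish its correctness and running time. First, I would carefully define the notion of a \emph{partial broadcast forest} at step $i$: a forest whose vertex set is $\{v_1,\ldots,v_i\}$ together with an assignment of an informed time to each vertex so that (i) $v_1$ is informed at time $0$, (ii) each non-root vertex is a child of an earlier-informed neighbor in $G$, and (iii) no vertex makes more than one call per round. I would then show that a state $s$ (parents, ranks, outgoing call schedules, and inform times restricted to the active window) is exactly the ``interface'' needed to decide whether such a partial forest can be extended to include $v_{i+1}$: by the bandwidth-$k$ property, any edge incident to $v_{i+1}$ goes to a vertex in $B_{i+1}\subseteq \text{active window at step }i+1$, so the constraints on $v_{i+1}$ depend only on the state, not on the history before it.

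Next I would prove the correctness of the DP by induction on $i$. The induction hypothesis is: for every valid state $s$ at step $i$, $dp[i][s]$ equals the minimum, over all partial broadcast forests on $\{v_1,\ldots,v_i\}$ realizing $s$, of $\max(\ell(s),\tau)$, where $\tau$ is the latest inform time among vertices outside the current active window (which are already ``locked in''). The base case $i=k$ is immediate: every valid state consists of a single tree rooted at $v_1$ with only active-window vertices, and $\ell(s)$ is exactly the minimum broadcast time. For the inductive step, I would show two directions: any partial forest for $s$ at $i+1$ restricts to a partial forest for some compatible $s'$ at $i$ (giving $dp[i+1][s]\le \max(dp[i][s^*],\ell(s))$); conversely, any $s'$ compatible with $s$ at step $i$ can be extended to a forest realizing $s$ at step $i+1$ by attaching $v_{i+1}$ according to the parent, rank, and call fields of $s$, giving the matching inequality. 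Taking $i=n$ and minimizing $dp[n][s]$ over all valid terminal states then yields $\bropt(G,v_1)$; reading back the $\mathrm{previous}$ pointers reconstructs an optimal scheme.

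For the running time, Proposition~\ref{prop:stateNumber} already gives $|S|=\oh(n^k)$ up to a $f(k)$ factor. Filling $dp[i+1][s]$ requires scanning compatible states $s'$; crucially, compatibility fixes the overlap fields, so the only unspecified field in $s'$ is the data for vertex $u_1$ at step $i$ (which becomes $x_k$ at step $i+1$) together with the added information for $v_{i+1}$'s environment. A clean way to achieve the claimed bound is to group states by their overlap data and store, in a second table indexed by overlaps, the minimum $dp[i][\cdot]$ over all states extending that overlap; this turns each transition into an $\oh(1)$-per-state lookup after a $f(k)\cdot n^k$ preprocessing per row, yielding total time $f(k)\cdot n^{k^2+\oh(k)}$ as required. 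For a general source $v_o\ne v_1$, I would run the DP on the two halves of the ordering separated by $v_o$, trying all $(2k)!$ orderings of $v_o$'s calls to its at most $2k$ neighbors, as outlined in the excerpt; this preserves the bound up to the $f(k)$ factor.

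The main obstacle, in my view, is the bookkeeping around the definition of compatibility. The state must encode just enough about the boundary subtrees for the induction to go through (so that every missing piece of the forest is actually outside the window and cannot interact with future vertices), while remaining small enough for the $n^{k^2+\oh(k)}$ bound to hold. In particular, one must be careful that the ``inform times'' field is needed only for the $k$ subtree roots in the window (not for every vertex) since only these determine when future calls can be initiated, and that the ``outgoing calls'' field only records calls to still-active vertices. Getting both the completeness of the state (so that compatibility implies extensibility) and the polynomial state bound in sync is the delicate step; once that is pinned down, the induction and the runtime analysis are routine.
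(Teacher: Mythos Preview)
Your proposal is correct and follows essentially the same approach as the paper: the same state definition over the active window, the same inductive invariant that $dp[i][s]$ equals the minimum broadcast time over partial schemes realizing $s$ on $\{v_1,\dots,v_i\}$, the same $dp[i{+}1][s]=\max\{dp[i][s^*],\ell(s)\}$ recurrence, and the same $(2k)!$ enumeration for a general source. Your overlap-grouping trick for transitions is a mild refinement over the paper's brute-force scan of all predecessor states, but it does not change the overall strategy.
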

}

\bandwidthMainTheorem{}
\begin{proof}

\textbf{Setup.}
Recall that we have a fixed cutwidth-$k$ ordering $(v_1,\dots,v_n)$. 
As discussed, at step $i$, the DP table has states $s$ that summarize the broadcast restricted to the \emph{active window} $A_i$ with boundary set $B_i$. 
Recall that, for a state $s$, $\ell(s)$ is the latest inform time among active vertices encoded by $s$.

\medskip
\noindent\textbf{Correctness.}
Define the \emph{overlap signature} of a state $s$ at step $i$ as the restriction
$\pi_i(s)$ of $s$ to the overlap $A_i\cap A_{i+1}$: this includes (i) parent/child relations among vertices in the overlap, (ii) the directions/times of incident used edges, and (iii) their inform times. 
By definition, two states $s'$ at $i$ and $s$ at $i{+}1$ are \emph{compatible} iff $\pi_i(s')=\pi_i(s)$. We use induction to prove the following invariant:

\smallskip\noindent
\emph{Claim 1 (Inductive invariant).}
For every $i\ge 0$ and every valid state $s$ at step $i$, the entry $dp[i][s]$ of the DP table is the minimum broadcast time of any feasible broadcast scheme for informing $v_1,\ldots, v_i$ which ends up at state $s$ for the active set.

\emph{Base $i=0$.}
As there are no vertices in this case, the answer for state $s_\phi$ is trivially $0$ to inform an empty set. 

\emph{Step $i\to i+1$.}
Let $s$ be any valid state at step $i+1$.
%
Choose a compatible predecessor $s^*$ at step $i$ that minimizes $dp[i][s^*]$. By the inductive hypothesis there is broadcast scheme $\mathcal{S}^*$ that realizes $s^*$ and informs $v_1,\ldots,v_i$  in $dp[i][s^*]$ rounds. Extend $\mathcal{S}^*$ inside the active window to realize $s$ at step $i{+}1$ (this is possible by compatibility of $s$ and $s^*$ and their validity). The broadcast time of the extended scheme is $\max\{dp[i][s^*],\ell(s)\}$, which is exactly the value assigned by the recurrence. Hence, $dp[i{+}1][s]$ upper-bounds the true optimum among schedules realizing $s$.

On the other hand, let $\mathcal{S}$ be any feasible schedule on $G_{i+1}$ that realizes $s$ with minimum broadcast time $\tau$. 
Let $s'$ be the state of $\mathcal{S}$ at step $i$; then $s'$ is valid at $i$ and compatible with $s$. By the inductive hypothesis, $dp[i][s']$ is no more than the broadcast time of $\mathcal{S}$ for informing $v_1,\ldots,v_i$. 
Since realizing $s$ from $s'$ cannot finish later than $\max\{dp[i][s'],\ell(s)\}$, we get
$
dp[i{+}1][s]\le \max\{dp[i][s'],\ell(s)\}\le \tau.$
Thus $dp[i{+}1][s]$ lower-bounds the optimum as well. This proves Claim 1.
We conclude that taking the minimum $dp[n][s]$ over valid states $s$ yields the optimal broadcast time on $G$.


\medskip
\noindent\textbf{Running time.}
Let $S$ be the set of all states. 
By Proposition~\ref{prop:stateNumber}, we have $
|S|\le (2k{+}1)^k\cdot(2k)^k\cdot(2k+1)^{k(2k+1)}\cdot n^k \;=\; f(k)\,n^{k}.$
That is, the size of the DP table would be $|S|n = f(k) n^{k+1}$.
To set each index of the DP table, we iterate over all possible states, which takes $\oh(|S|)$ time. Overall, this yields a total time of $\oh(f(k)^2\, n^{2k+1})$.
\end{proof}

Note that the above dynamic programming approach is formulated for instances with multiple sources, and therefore Theorem~\ref{Th:BandWidthMain} extends to the multi-source setting.

\section{Improved Approximation Algorithm for Split Graphs}
In this section, we present a polynomial-time algorithm that finds a broadcasting scheme which completes within $\apxsplit  \bropt$ rounds for split graphs.

Consider an instance $(G,s)$ of \telebr in an input split graph $G=(V,E)$ whose vertex set can be partitioned into a clique $C$ and an independent set $I$. 
Throughout, we use $B$ 
to denote the bipartite graph induced by the edges incident to $I$. \\ We begin by introducing the necessary preliminaries and then describe our algorithm. \vspace{2mm}

\noindent\textbf{Minimum-degree cover.}
An important concept related to broadcasting on split graphs is the \emph{minimum-degree cover}, which also plays a key role in our algorithm.
A \emph{cover} of $B$ is a subset $M$ of edges of $B$ such that every vertex $u\in I$ is incident to exactly one edge of $M$.
For a clique vertex $v\in C$, let 
$\deg_M(v)$ denote the degree of $v$ under $M$, and define the {degree} of $M$ as
$\Delta(M) := \max_{v\in C} \deg_M(v)$.
Let $d^*(B) := \min\{ \Delta(M) \mid M \text{ is a cover of } B \}$.
We refer to a cover attaining degree $d^*(B)$ as a
\emph{minimum-degree cover} of $B$. When the bipartite graph $B$ is clear from the context, we write $d^*$ instead of $d^*(B)$ (See Figure~\ref{fig:covers}).

The value $d^*$ can be computed in polynomial time using a straightforward max-flow formulation.
Introduce a source $s'$, a sink $t'$, connect $s'$ to each $u\in C$ with capacity~$D$,
connect each $(u,v)\in E(B)$ with capacity~$1$,
and connect each $v\in I$ to $t'$ with capacity~$1$.
The smallest value $D$ for which a flow of value $|I|$ exists is exactly $d^*$. 

\vspace*{3mm}
\noindent\textbf{Minimum-overload covers.}
Fix parameters $\tau_{\ell} < \tau_{u}$.
For a cover $M$, define the \emph{overload} of a clique vertex $v\in C$ by $\mathrm{ov}_M(v) := \max\{0,\deg_M(v)-\tau_{\ell}\},$
and define the total overload as
$\mathrm{OV}(M) := \sum_{v\in C} \mathrm{ov}_M(v).$
Among all covers $M$ with $\Delta(M)\le \tau_{v}$,
let $M^{\mathrm{ov}}$ be one minimizing $\mathrm{OV}(M)$,
and let $R := \mathrm{OV}(M^{\mathrm{ov}})$. See Figure~\ref{fig:covers} for an illustration.

Among all covers of $B$, the one with the minimum-overload cover can be found via a reduction to the min-cost flow, which can be solved optimally in polynomial time (see, e.g., ~\cite{flowBook}). For the reduction, we construct a network flow from $B$ as follows:
$s'$ is connected to every vertex $u\in C$ with two parallel edges:
one of capacity $\tau_{\ell}$ and cost~$0$,
and one of capacity $\tau_{u}-\tau_{\ell}$ and cost~$1$;
each edge $(u,v)\in E(B)$ has capacity~$1$ and cost~$0$;
each vertex $v\in I$ is connected to the sink with an edge of capacity~$1$ and cost~$0$.
Sending one unit of flow through the costly edge corresponds to one unit of overload.
Minimizing the total cost yields a cover minimizing overload. \vspace{2mm}


\begin{figure}[!t]
 \begin{minipage}[b]{0.48\textwidth}
\centering        \scalebox{.71}{\coverOne}
\subcaption{Cover $M$} 

        \label{fig:coverOne}
    \end{minipage} \  \ 
    \hfill 
    \begin{minipage}[b]{0.48\textwidth}
        \centering
        \scalebox{.71}{\coverTwo} 
        \centering \subcaption{Cover $M'$} 
        \label{fig:coverTwo}
    \end{minipage} \  \   
\caption{Two minimum-degree coverings of $B$ with $d^* = 3$. When $\tau_\ell = 2$, we have $OV(M) = 2$ (both $u_1$ and $u_2$ have overload 1), while $OV(M')=1$ (only $u_1$ has overlaod 1).}
    \label{fig:covers}
\end{figure}

\noindent\textbf{The \SIMPLE algorithm.}
We begin with a warm-up approximation algorithm of~\cite{HarutyunyanHSplit23}, which we call \SIMPLE. The algorithm works as follows: 

\smallskip
\noindent
\begin{enumerate}
    \item Broadcast within the clique $C$ until all clique vertices are informed.
    \item Use a minimum-degree cover $M$ of $B$ to inform the vertices in $I$:
    each $v\in C$ informs its assigned neighbors one per round.
\end{enumerate}

Step~(1) completes in at most $\lceil \log |C| \rceil$ rounds.
Step~(2) requires $\Delta(M)=d^*$ additional rounds.
Thus, $T_{\textsc{Simple}} \le \lceil \log |C| \rceil + d^*.$ Both $\lceil \log |C| \rceil$ and $d^*$ are lower bounds on $\bropt$: the former is necessary to inform all clique vertices,
while the latter follows since each vertex of $I$ must be informed by some neighbor in $C$,
and each clique vertex can inform at most one new vertex per round.
Hence, $\SIMPLE$ is a $2$-approximation. \vspace{2mm}

\paragraph*{\splitalgo: Improved Approximation for Split Graphs}
Our approximation algorithm, named \splitalgo, refines \textsc{Simple} by
distinguishing clique vertices according to their load in
the minimum-overload cover.
Intuitively, the vertices with high overload correspond to unavoidable congestion in any broadcast schedule and are handled in a separate phase to be informed earlier, thereby improving the approximation ratio.
For that, we compute a cover that keeps the maximum load below a prescribed threshold while minimizing the extent to which the assignment exceeds a smaller ``soft cap.''
In particular, \splitalgo\ proceeds as follows (see Algorithm~\ref{alg:split}):

\begin{algorithm}[t]
\caption{\textsc{SplitBroadcast} (single source)}
\label{alg:split}
\begin{algorithmic}[1]
\REQUIRE Split graph $G=(C\cup I,E)$ with source $s$
\STATE Compute a minimum-degree cover of $B$ and let $d^*$ be its degree.
\STATE Set thresholds 
$\tau_\ell = \alpha d^*$, $\tau_m = (\alpha+\epsilon) d^*$, and $\tau_u = \beta d^*$ (for some $\alpha \leq 1 \leq \beta,$ and small $\epsilon>0$).

\STATE Compute a minimum-overload cover $M^{\mathrm{ov}}$ with cap $\tau_{u}$ and overload parameter $\tau_{\ell}$.
\STATE Let $H := \{u\in C : \deg_{M^{\mathrm{ov}}}(u)\ge \tau_m\}$ (high-load vertices) and \\ $L := C\setminus H$ (low-load vertices).
\STATE Form the following broadcast scheme \CLASS:
\begin{itemize}
    \item Ensure that at least one vertex of $H$ and one vertex of $L$ are informed (within two rounds).
    \item Broadcast within $H$ until all vertices of $H$ are informed.
    \item Broadcast from vertices of $H$ to their neighbors in $I$ assigned by $M^{\mathrm{ov}}$.
    \item In parallel, broadcast within $L$ and then from $L$ to their neighbors in $I$ assigned by $M^{\mathrm{ov}}$.
\end{itemize}
\STATE Between \SIMPLE and \CLASS, return the faster scheme.
\end{algorithmic}
\end{algorithm}

\smallskip
\noindent
\emph{Step 1:} Compute a minimum-degree cover of $B$ and let $d^*$ be its degree.\\
\emph{Step 2:} Set thresholds $\tau_\ell = \alpha d^*$, $\tau_m = (\alpha+\epsilon) d^*$, and $\tau_u = \beta d^*$ for some $\alpha \leq 1 \leq \beta,$ that will be decided later, and a small $\epsilon>0$. \\
\emph{Step 3: Compute a minimum-overload cover.}
Use the min-cost flow reduction described earlier to compute the minimum overload cover of $B$ with parameters $\tau_\ell$ and $\tau_u$ in polynomial time.

\smallskip
\noindent
\emph{Step 4: Classify clique vertices.}
Using $M^{\mathrm{ov}}$, classify clique vertices as \emph{high-load} if they are assigned at least
$\tau_m=(\alpha+\epsilon)d^*$ neighbors, and \emph{low-load} otherwise.
Let $H$ be the set of high-load vertices and $L=C\setminus H$.

\smallskip
\noindent
\emph{Step 5: Form the broadcast scheme \CLASS as follows}
\begin{itemize}
    \item \emph{Step 5(a): Seeding.}
    Within two rounds, ensure that at least one vertex in $H$ and at least one vertex in $L$ becomes informed.
It takes one round to inform one vertex $u$ in $C$ (if $s\in I$), and another round 
to inform another vertex in $C$ which belongs to a different group than $u$. 
\item{Step 5(b): Two parallel broadcast processes.}
From this point on, run two broadcast processes in parallel.
First, broadcast within the clique subgraph induced by $H$ until all vertices in $H$ are informed, and then each $u\in H$
informs its assigned independent neighbors (those $v\in I$ with $(u,v)\in M^{\mathrm{ov}}$) one per round.
This part takes at most $\lceil \log |H|\rceil + \tau_{u}$ rounds after the seeding step.

In parallel, we broadcast within the clique subgraph induced by $L$ until all vertices in $L$ are informed, and then each $u\in L$ informs its assigned independent neighbors one per round.
Since every low-load vertex has at most $\tau_m$ assigned neighbors, this part takes at most $\lceil \log |C|\rceil + \tau_m$ rounds after the seeding step.
\end{itemize}

\noindent\emph{Step 6: Among \SIMPLE and \CLASS, return the scheme that has a smaller broadcast time. }


\paragraph*{Analysis of \splitalgo.}
We analyze the broadcast time of the algorithm. 
In what follows, we let $n$ denote $|C|$; recall that $d^*$ denotes the degree of a minimum-degree cover of $B$,
and $M^{\mathrm{ov}}$ is a minimum-overload cover with parameters
$\tau_\ell=\alpha d^*$, $\tau_m=( \alpha+\epsilon)d^*$, and $\tau_u=\beta d^*$.
Let $H$ denote the set of clique vertices whose load in $M^{\mathrm{ov}}$ is at least $\tau_m$,
and let $n_h=|H|$.

\smallskip
\noindent
\textbf{Upper bound on the algorithm.}
After two seeding rounds, the algorithm \CLASS runs two broadcast processes in parallel within $H$ and $L$.
Broadcasting within $H$ takes at most $\lceil \log n_h\rceil$ rounds,
followed by at most $\tau_u$ rounds to inform the independent vertices
assigned to $H$ by $M^{\mathrm{ov}}$.
Similarly, broadcasting within $L=C\setminus H$ takes at most $\lceil \log n\rceil$
rounds, followed by at most $\tau_m$ rounds to inform their assigned neighbors.
Hence, the broadcast time of the algorithms satisfy
\begin{align}
\br({\mathrm{CLASS}})
\;& \le\;
2
+\max\!\left\{
\lceil \log n_h\rceil + \beta d^*,\;
\lceil \log n\rceil + (\alpha+\epsilon) d^*
\right\}, \\ 
\br({\text{\splitalgo}}) & = \min\{\br({\text{\SIMPLE}}), {\br(\text{\CLASS}})\}. \label{eq:zero}
\end{align}

\smallskip
\noindent
\textbf{Lower bounds for $\bropt$.}
Throughout the analysis, we make use of the following lemma, which states that there exists an optimal broadcast tree in which every vertex in $I$—except possibly the source, if it belongs to $I$—appears as a leaf.

\begin{figure}[!t]
 \begin{minipage}[b]{0.48\textwidth}
\centering        \scalebox{.75}{\figMatchConvertOne}
\subcaption{Broadcast tree before the transformation} 

        \label{fig:matchConvertOne}
    \end{minipage} \  \ 
    \hfill 
    \begin{minipage}[b]{0.48\textwidth}
        \centering
        \scalebox{.75}{\figMatchConvertTwo} 
        \centering \subcaption{Broadcast tree after the transformation}
        \label{fig:matchConvertTne}
    \end{minipage} \  \   
\caption{An example showing how a broadcast scheme with a vertex $v\in I$ as a non-leaf can be transformed, without increasing the broadcast time, to eliminate non-leaf vertices in the independent set.}
    \label{fig:matchConvert}
\end{figure}

\begin{lemma}\label{lemma:goodopt}
Let $G = (C \cup I, E)$ be a split graph with clique $C$ and independent set $I$, and let $s \in V(G)$ be the source vertex. There exists an optimal broadcasting scheme in which every non-source vertex of $I$ is a leaf in the broadcasting tree.
\end{lemma}

\begin{proof}
Let $T$ be an optimal broadcast tree. Suppose $T$ contains a vertex $v\in I$ that is a non-source internal node, and let $z$ be the \emph{first} vertex that $v$ informs. 
We show how to modify $T$, without increasing its broadcast time, so that $v$ has one fewer child. By repeating this transformation, we eventually obtain a broadcast tree in which every vertex in $I$ is either a leaf or the root, as desired.

\emph{Modifying $T$:} remove the edge $v \to z$ and instead have $u$ call $z$ at the time it originally called $v$, then call $v$ one step later. That is, $z$ takes $v$'s former slot in $u$'s calling sequence. Moreover, $z$ informs $v$ as its first child. See Figure~\ref{fig:matchConvertTne} for an illustration. We verify that the broadcast time does not increase

\begin{itemize}
    \item \textbf{$z$ and its subtree:} $z$ is informed one step \emph{earlier}, so every vertex in its subtree is informed earlier.
    \item \textbf{$v$ and its remaining children:} $v$ is informed one step later, but has lost its first child $z$. Each remaining child of $v$ (originally called at times $t_v + 2, t_v + 3, \ldots$) is now called at $(t_v+1)+1, (t_v+1)+2, \ldots$\,---\,the same times as before.
    \item Other vertices receive at the same time as before.
\end{itemize}

Therefore, the modified tree remains optimal. 
\end{proof}

In light of the above lemma, we will consider an \emph{optimal cover}, defined as a covering of $B$ induced by an optimal broadcast tree in which all vertices of $I$ appear only as leaves (or possibly the root) and are covered by their neighbors. This assumption is used in the proof of the following theorem, which is the main result of this section.

\begin{theorem}\label{Th:SplitMain}
\splitalgo gives an approximation factor of \apxsplit for \telebr in split graphs. 
\end{theorem}

\begin{proof}
As discussed before, we have $\bropt\ge \max \{\lceil \log n\rceil, d^*\}$. We use a case analysis to complete the proof.

\smallskip
\noindent
\textbf{Case 1:} 
Suppose the optimal cover has degree at least $\beta d^*$. This implies that $d^* \leq \bropt/\beta$.
In this case, \textsc{Simple} completes in
\begin{align}
\br(\SIMPLE) \leq \lceil \log n\rceil + d^*
\;\le\;
(1+1/\beta)\bropt. \label{eq:one}    
\end{align}

Since \splitalgo returns the better scheme between \SIMPLE and \CLASS, its approximation factor in this case is at most $(1+1/\beta)$.

\smallskip
\noindent
\textbf{Case 2:} Conisder otherwise. Therefore, $d^* > \bropt/\beta$.
Recall that, in the broadcast scheme of \CLASS, the last leaf with a parent in $H$ is informed no later than $\lceil \log n_h \rceil + \beta d^*$. 

In this case,
\begin{align}
 2+\lceil \log n\rceil + (\alpha+\epsilon) d^*
\;\le\;
(1+\alpha+\epsilon)\,\bropt + 2, \label{eq:two}
\end{align}


By optimality of $M^{\mathrm{ov}}$, the total overload $R$ of
$M^{\mathrm{ov}}$ with respect to $\tau_\ell$ and $\tau_u$ is at most the overload of the optimal cover.
Moreover, in the optimal cover, each clique vertex whose load is at least $\tau_\ell$ contributes at most
$\tau_u-\tau_\ell=(\beta- \alpha)d^*$ units of overload.
It follows that in the optimal cover, there are at least
$\frac{R}{(\beta-\alpha)d^*}$ clique vertices of load at least $\tau_\ell$.
Informing these vertices and their neighbors in $I$ defines a lower bound for the optimal broadcast time. That is, 
\begin{align*}
\bropt \geq \lceil \log \frac{R}{(\beta-\alpha)d^*}\rceil
+\alpha d^* \geq \log \frac{R}{d^*}
+\alpha d^* -\log (\beta-\alpha).
\end{align*}


On the other hand, every vertex in $H$ contributes at least
$\tau_m-\tau_\ell=\epsilon d^*$ units of overload in the broadcast scheme of \CLASS, implying
$n_h \le \frac{R}{\epsilon d^*}$.
Substituting this bound on $n_h$ into the upper bound on $T_{\mathrm{ALG}}$ yields
\begin{align}
2 + \lceil \log n_h \rceil + \beta d^* &\leq  
\log \frac{R}{\epsilon d^*} + \beta d^* + 3 \nonumber\\
&=\log \frac{R}{d^*} + \beta d^* + 3 -\log \epsilon  \nonumber\\
\;&\le\;
\frac{\beta}{\alpha}\,\bropt + 3 - \log \epsilon - \frac{\beta}{\alpha} \log (\beta-\alpha) \nonumber\\
&=\frac{\beta}{\alpha}\,\bropt + \oh(1). \label{eq:three}
\end{align}

Therefore, for the broadcast time of \CLASS, 
we can conclude 
\begin{align*}
\br(\CLASS) & = 2
+\max\!\left\{
\lceil \log n_h\rceil + \beta d^*,\;
\lceil \log n\rceil + (\alpha+\epsilon) d^*
\right\}  &\text{(by \ Equation~\ref{eq:zero})} \\
& \leq \max\{(1+\alpha+\epsilon) \bropt, \frac{\beta}{\alpha} \bropt + \oh(1)\} &\text{(by \ Equations~\ref{eq:two} and \ref{eq:three})}
\end{align*}

To conclude, the broadcast time of \splitalgo is at most $(1+1/\beta)\bropt$ in Case 1 (Equation~\ref{eq:two}) and at most $\max\{ (1+\alpha+\epsilon) \bropt, \frac{\beta}{\alpha} \bropt + \oh(1)\}$ in Case 2. This gives an approximation factor of $\max\{1+1/\beta,1+\alpha+\epsilon, \beta/\alpha \}$ for \splitalgo. For $\alpha =0.754 , \beta = 1/\alpha,$ and small $\epsilon$, this would be come $1/\alpha^2 \approx \apxsplit$.
\end{proof}

\subsection*{Extension to multi-source telephone broadcasting}

In the multi-source setting, we first ensure that a set $U$ of clique vertices—consisting of clique sources and clique vertices matched to sources in the independent set—is informed in the first round.
We then restrict attention to the remaining bipartite graph and apply the same covering-based strategy as in the single-source case.
The rest of the algorithm and its analysis carry over verbatim, with the number of initially informed clique vertices $|U|$ effectively reducing the broadcast depth by $\lfloor \log |U| \rfloor$.

We now describe the extension in more detail.
Let $I_s$ and $C_s$ denote the sets of source vertices in $I$ and $C$, respectively.
If $I_s$ is nonempty, we first compute a maximal matching between $I_s$ and vertices in $C\setminus C_s$.
Let $U\subseteq C$ be the set of clique vertices that are either sources themselves or matched to a source in $I_s$.
Consequently, all vertices in $U$ are informed by the end of round~$1$.

Let $B'$ denote the bipartite graph induced by the vertex sets $C$ and $I\setminus I_s$.
After round~1, the vertices in $U$ have the message and must inform the remaining uninformed vertices in $C\cup I$.
From this point onward, the algorithm proceeds analogously to the single-source setting.
In particular, we compute the minimum-degree cover and the minimum-overload cover with respect to $B'$.
The parameters $\alpha$ and $\beta$ are defined in the same way as before, and \splitalgo selects the better of the two strategies \SIMPLE and \CLASS.

The procedures \SIMPLE and \CLASS are also defined analogously.
The only modification is that, in the second round, \CLASS ensures that at least $|U|$ vertices from each of the sets $L$ and $H$ are informed.
Subsequently, vertices in $L$ and $H$ are informed in parallel, requiring at most $\lceil \log(n/|U|)\rceil$ and $\lceil \log(n_H/|U|)\rceil$ additional rounds, respectively.

The analysis is identical to the single-source case, with the only difference that $n$ is replaced by $n/|U|$ in the cost analysis of both \splitalgo and the optimal broadcast scheme.




    
\section{Concluding Remarks}



In this work, we advanced the study of \telebr by establishing new hardness results and developing approximation schemes for specific graph families. On the hardness side, we proved NP-hardness for very restricted families of \kcycle and \kpath graphs. 
On the algorithmic side, we presented efficient polynomial-time approximation schemes (EPTASs) for both \kcycle and \kpath graphs; to the best of our knowledge, these are the first PTAS-type results for \telebr.
We also introduced an optimal, polynomial-time algorithm for graphs of bounded cutwidth, thereby extending the tractable frontier of the problem.
Finally, we developed a polynomial-time approximation algorithm for split graphs that improves upon the previously known factor--2. 

A possible direction for future work is to improve the
 approximation factor for split graphs. In particular, it
remains an open question whether Polynomial Time Approximation Schemes (PTASs) exist for these graph classes. The major open problem in this domain is determining whether a constant-factor approximation exists for general graphs. While progress on this question has been slow, the algorithmic ideas developed in this work may be applicable to broadcasting in
other families of sparse graphs. For example, it would be interesting to adapt the integer covering formulations to other broadcasting models or broader sparse graph classes. 
For series–parallel graphs, our hardness result via \kpath graphs rules out tractability, yet, to the best of our knowledge, no constant-factor approximation is currently known.




 \pagebreak
\bibliography{refstidy.bib}
\bibliographystyle{plainnat}





    


    
    

\end{document}